\crefname{hypothesis}{Hypothesis}{Hypotheses}
\colorlet{texcscolor}{blue!50!black}
\colorlet{texemcolor}{red!70!black}
\colorlet{texpreamble}{red!70!black}
\colorlet{codebackground}{black!25!white!25}
\lstdefinestyle{siamlatex}{%
	style=tcblatex,
	texcsstyle=*\color{texcscolor},
	texcsstyle=[2]\color{texemcolor},
	keywordstyle=[2]\color{texemcolor},
	moretexcs={cref,Cref,maketitle,mathcal,text,headers,email,url},
}
\DeclareTotalTCBox{\code}{ v O{} }
{ 
	fontupper=\ttfamily\color{black},
	nobeforeafter,
	tcbox raise base,
	colback=codebackground,colframe=white,
	top=0pt,bottom=0pt,left=0mm,right=0mm,
	leftrule=0pt,rightrule=0pt,toprule=0mm,bottomrule=0mm,
	boxsep=0.5mm,
	#2}{#1}
\patchcmd\newpage{\vfil}{}{}{}
	\title{Penalized Projected Kernel Calibration for Computer Models\thanks{Submitted to the editors DATE.
			\funding{Dr. Wang's research was supported by the National Natural Science 
Foundation of China (12101024), the Natural Science Foundation
of Beijing Municipality (1214019).}}}
	\author{Yan Wang \thanks{School of Statistics and Data Science, Faculty of Science, Beijing University of Technology, Beijing 100124, China (\email{yanwang@bjut.edu.cn}).
	}	
	}
	\title{Penalized Projected Kernel Calibration for Computer Models\thanks{Submitted to the editors DATE.
			\funding{Dr. Wang's research was supported by the National Natural Science
Foundation of China (12101024), the Natural Science Foundation
of Beijing Municipality (1214019).}}}
	\author{Yan Wang \thanks{School of Statistics and Data Science, Faculty of Science, Beijing University of Technology, Beijing 100124, China (\email{yanwang@bjut.edu.cn}).
	}	
	}
\begin{document}

\maketitle
\begin{abstract}
{ Projected kernel calibration is a newly proposed frequentist calibration method, 
which is asymptotic normal and semi-parametric. Its loss function is usually referred to 
as the {\em PK loss function}. } In this work, we prove the uniform convergence of PK loss function and show that  (1) when the sample size is large, any local minimum point and local maximum point of the $L_2$ loss between the true process and the computer model is a local minimum point of the PK loss function;  (2) all the local minima of the PK loss function converge to the same value. These theoretical results imply that it is extremely hard for the projected kernel calibration to identify the global minimum of the $L_2$ 
loss, i.e. the optimal value of the calibration parameters. To solve this problem, a frequentist method which we term {\em penalized projected kernel calibration method} 
is suggested and analyzed in detail.  We prove that the proposed method is as 
efficient as the projected kernel calibration method.  Through an extensive set of numerical simulations, and a real-world case study, we show that the proposed 
calibration method can accurately estimate the calibration parameters. We also show that its performance compares favorably to other calibration methods regardless of the sample size.
\end{abstract}

\begin{keywords}
{Projected kernels},
{Calibration of computer models},
{Stationary points},
{Penalized projected kernel calibration}

\end{keywords}

\begin{AMS}
  62G08, 62M30, 62M40
\end{AMS}

\section{Introduction}

Computer models, or simulators, are increasingly used to reproduce the behavior of complex systems in physics, engineering and human processes. Computer models usually involve model parameters that cannot be determined, or observed in the physical processes. The input values of these model parameters may significantly affect the accuracy and  usefulness of the simulations' outputs.  When physical observations are available, one can adjust the computer model parameters so that the computer outputs match the physical data. This stage is called the \emph{calibration} of computer models, and the parameters are usually referred to as \emph{calibration parameters}.
 
The celebrated Bayesian calibration method of Kennedy and O'Hagan \cite{kennedy2001bayesian} is one of the most widely used approaches 
for the calibration of computer models. 
We refer to \cite{kennedy2001bayesian} for more details about computer model calibration. 

{Let us denote the input domain of the physical experiments by $\Omega$, which 
is assumed to be a convex and compact subset of {$\mathbb R^d$.}} Let $\bm X=\{\bm{x}_1,
\ldots,\bm x_{n}\}\subseteq\Omega $ be the set of design points for the physical experiments and $\bm{Y}=(y_1,\ldots,y_n)^{T}$ the corresponding physical responses. 
Throughout the paper,  we denote matrices and vectors
by bold symbols,  with the superscript $T$ indicating transposition. 
Suppose the physical experimental observation is generated by
\begin{eqnarray}
y_i=\zeta(\bm x_i)+\epsilon_i,
\label{eq2.1}
\end{eqnarray}
where $i=1,\ldots,n$; $\zeta(\cdot)$ is called the \emph{true process}, which is an unknown function;  $\epsilon_i$'s  are
independent and identically distributed random variables with mean zero and finite variance  $\sigma^2>0$. { We also assume that the
random error  $\epsilon_i$'s are sub-Gaussian in the sense that there exists universal constants $C_0, \sigma_0>0$
 such that 
 $P\left( |\epsilon_i|>t\right)\leq C_0e^{-t^2/\sigma_0^2}$ holds for all $t > 0$.}

{Let $y^s(\bm x,\bm\theta)$  be the output of the deterministic computer simulator, 
given the control variable $\bm x$ and the calibration parameter $\bm\theta$.  
Suppose $\bm\theta\in\Theta$ and assume that the parameter space $\Theta$ is a compact subset of {$\mathbb R^q$}.} The idea of calibration is to find the values of the calibration parameters such that the computer outputs are as close as possible to the physical experimental observations.

Since computer models are usually built using assumptions and simplifications which are not exactly correct in practice, Kennedy and O'Hagan (KO thereafter) assume that the computer outputs cannot perfectly fit the physical experimental observations. That is, there is an unavoidable \emph{model discrepancy} between the true process and the computer model. KO use the following model (named as KO's model) to take this model discrepancy into account:
\begin{eqnarray}
\zeta(\cdot)= y^s(\cdot,\bm\theta^*)+\delta^*(\cdot),
\label{eq2.2}
\end{eqnarray}
where $\bm\theta^*$ denotes the combination of the optimal calibration parameters and $\delta^*$ is the  \textit{model discrepancy function}. The goal of calibration is to estimate $\bm\theta^*$. However, equation (\ref{eq2.2}) is not enough to fully determine $\bm\theta^*$, because the function $\delta^*$ is also unknown.  

Upon assuming that the model discrepancy function is a realization of a Gaussian Process, KO give a Bayesian estimator of the calibration parameter.  {Tuo and Wu \cite{tuo2015efficient} have shown that the estimator suggested by KO does not converge to  $\bm\theta^*$. The optimal calibration parameters $\bm\theta^*$ is
defined as }
\begin{eqnarray}
\begin{aligned}
\bm\theta^{*}:=\operatorname*{argmin}_{\Theta} L_2(\bm\theta),
\end{aligned}
\label{thetastar}
\end{eqnarray}
where 
\begin{eqnarray}
L_2(\bm\theta)=\int_{\Omega}   ({\zeta(\bm x)-y^s(\bm x,\bm\theta)})^2 d\bm{x}
\label{l2loss}
\end{eqnarray}
is the $L_2$ loss function between the true process and the computer model. 
 
Many efforts have been made to obtain a consistent estimator of the calibration parameter.
Tuo and Wu \cite{tuo2015efficient} have plugged the kernel ridge regression of $\zeta$ 
into (\ref{thetastar}) to get the $L_2$ estimator. Wong et al. \cite{wong2014frequentist} have proposed a least square estimator. Gu and Wang \cite{gu2018scaled} have pointed out that both the approaches have limitations in their calibration and prediction performance, especially when the number of physical observations is small.  To predict the true process model more {accurately}, Gu and Wang \cite{gu2018scaled} have proposed a scaled Gaussian Process model to mimic the model discrepancy function and gave a new Bayesian calibration method based on KO's model, termed scaled Gaussian Process model calibration method. {The convergence rate of  the estimator} given by scaled Gaussian Process model calibration method has been proven to be slower than $n^{-1/2}$  \cite{gu2018theoretical} at variance with the $L_2$  and least square calibrations. This slower convergence rate may be an issue when an efficient estimator of the calibration parameter is needed.
In order to better estimate the calibration parameter, Tuo \cite{tuo2019adjustments} { has} suggested a projected kernel calibration method based on adding an orthogonality constraint (\ref{or cond}) to the KO's model. Tuo has also proven the semi-parametric consistency of this estimator. 

The projected kernel calibration method has been successfully applied to the calibration 
of the composite fuselage simulation  with a small sample  \cite{wang2020effective}.
However, the numerical simulations reported in \cite{gu2018scaled} have shown that it may get stuck in local minima or even local maxima of the $L_2$ loss, especially when the sample size is large. 
 

In this work, we first verify analytically the results obtained numerically  in \cite{gu2018scaled}, and then { propose} a new calibration method which is more robust with respect to the number of physical observations. The main contribution 
of this work may be summarized as follows:

At first, the performance of the projected kernel calibration method is explored and { the results are}:
\begin{itemize}
 \item Any local minimum point and local maximum point of the $L_2$ loss function is a local minimum point of the projected kernel loss function (\ref{pkloss});
 \item All the local minimum values  of the projected kernel loss function converge to the same value as the sample size tends to infinity.
 \item The projected kernel loss function uniformly converges to a projected kernel $L_2$ loss function (\ref{con.delta.2});
\end{itemize}
These results imply that it is extremely hard for the projected kernel calibration to identify the global minima of the $L_2$ loss function.

Second, in order to address the drawbacks of the projected kernel calibration, we put forward a penalized projected kernel calibration method. 
{The main idea of the proposed method is to add a penalty $\|\delta^*\|_{L_2(\Omega)}$ to the projected kernel loss function. By adding this penalty,  the proposed loss function avoids the problems that exist in the projected kernel loss function. That is,  the local minimum points  of the proposed loss function no longer contain the local maximum points  of the $L_2$ loss function; and the local minima of the proposed loss function are different  even for a sufficiently large $n$.} We also prove that the proposed method is  semi-efficient. {A theoretical comparison} between the proposed method and some existing calibration methods is shown in Table \ref{tab:com.calibration}. 

\begin{table}[!htbp]
    \caption{Comparison of different calibration methods}
    \label{tab:com.calibration}
    \centering
    \footnotesize
    \setlength{\tabcolsep}{3pt}
    \renewcommand{\arraystretch}{1.2}
  \tiny{  \begin{tabular}{lccccccc}
        \hline
        Method&Abbreviation &Consistence (Y/N) &Penalty&Consistence rate $(n^{-t})$\\
        \hline
        KO's calibration \cite{kennedy2001bayesian,tuo2020improved} &KO & $N$ &  $\|\delta^*\|_{\mathcal{N}_{K}(\Omega)}$& $m/(4m+d)$  \\
         $L_2$ calibration  \cite{tuo2015efficient}    & $L_2$&$Y$&$\|\delta^*\|_{\mathcal{N}_{K}(\Omega)}$& $1/2$\\
       Least square calibration  \cite{wong2014frequentist} & LS&$Y$&$\|\delta^*\|_{\mathcal{N}_{K}(\Omega)}$& $1/2$\\
      Scaled Gaussian Process model calibration  \cite{gu2018scaled} &SGP & $Y$ &  $\|\delta^*\|_{\mathcal{N}_{K}(\Omega)}$ and $\|\delta^*\|_{L_2(\Omega)}$  & $m/(2m+d)$\\
  Projected kernel calibration  \cite{tuo2019adjustments} & PK & $Y$  & $\|\delta^*\|_{\mathcal{N}_{K_{ \theta^*}}(\Omega)}$ & $1/2$\\
   Penalized projected kernel calibration &    PPK&$Y$&$\|\delta^*\|_{\mathcal{N}_{K_{\theta^*}}(\Omega)}$ and $\|\delta^*\|_{L_2(\Omega)}$  & $1/2$\\
           \hline
    \end{tabular}}
  {Remark 1:  
  $\|\cdot\|_{L_2(\Omega)}$, $\|\cdot\|_{\mathcal{N}_{K}(\Omega)}$ and $\|\cdot\|_{\mathcal{N}_{K_{\theta^*}}(\Omega)}$ denote the corresponding the $L_2$ norm,  the norm of the  reproducing kernel Hilbert space  generated by $K$ and $K_{\theta^*}$ respectively. 
  
   Remark 2: Suppose the  reproducing kernel Hilbert space $\mathcal{N}_{K}(\Omega)$ can be continuously embedded into a (fractional) Sobolev space $H^m(\Omega)$.}
  \end{table}

The  paper is organized as follows. In Section \ref{sec2}, we give a brief review of the projected kernel calibration method.  In Section \ref{comparison},  the performance of the projected kernel calibration method {is examined} theoretically and a numerical 
simulation is  conducted to verify these theoretical assertions.  In Section \ref{sec4}, a penalized projected kernel calibration method is introduced and its convergence properties are analyzed. Computational problems are addressed in Section \ref{comp}. The results of two numerical examples and a real-world spot welding study are presented in Section \ref{sec5}. Concluding remarks and further discussions are given in Section \ref{sec6}.

\section{Review on projected kernel calibration}
\label{sec2}
In this section we review the projected kernels and the projected kernel calibration.

\subsection{Projected kernels}
\label{pkernel}
Let $K(\cdot,\cdot)$ to be a positive definite kernel function over $\Omega\times\Omega$, such as
the Mat\'ern kernel function \cite{santner2013design,stein2012interpolation} with
\begin{eqnarray}\label{matern f}
K(h;\nu,\rho)=\frac{1}{\Gamma(\nu)2^{\nu-1}}\left(\frac{ h}{\rho}\right)^\nu K_\nu\left(\frac{ h}{\rho}\right), \nu>0, \rho>0,
\end{eqnarray}
where $h=\parallel \bm{x}_i-\bm{x}_j\parallel$ and $\|\cdot\|$ denotes the usual Euclidean distance; $K_\nu$ is the modified Bessel function of the second kind; $\nu$ and $\rho$ are \textit{fixed} parameters.
Suppose $\mathcal{G}$ is a finite-dimensional subspace of $ L_2(\Omega)$ with $dim \mathcal{G} = q$ and  $\{e_1,\ldots,e_q\}$ is a set of orthonormal basis of $\mathcal{G}$. For any $f \in L_2(\Omega)$, let $\mathcal{P}_{\mathcal{G}}f:=\sum_{i=1}^q <f,e_i>_{L_2(\Omega)}e_i$ be the projection of $f $ onto $\mathcal{G}$ and 
$\mathcal{P}^{\perp}_{\mathcal{G}}f=f-\mathcal{P}_{\mathcal{G}}f $ be the perpendicular component. Then the projected kernel of $K$ is defined as
\begin{eqnarray}
 K_{\mathcal{G}}=K-\mathcal{P}_{\mathcal{G}}^{(1)}K-\mathcal{P}_{\mathcal{G}}^{(2)}K+\mathcal{P}_{\mathcal{G}}^{(1)}\mathcal{P}_{\mathcal{G}}^{(2)}K,
\label{defpk}
\end{eqnarray}
where $\mathcal{P}_{\mathcal{G}}^{(1)}$, $\mathcal{P}_{\mathcal{G}}^{(2)}$ are projection transformations from $L_2(\Omega\times \Omega) $ to $L_2(\Omega\times \Omega)$,
{ \begin{eqnarray}
 \begin{aligned}
\mathcal{P}_{\mathcal{G}}^{(1)}K(\boldsymbol x_1,\boldsymbol x_2)&=\sum_{j=1}^q e_j(\boldsymbol x_1)\int_{\Omega}K(\boldsymbol x,\boldsymbol x_2)e_j(\boldsymbol x)d \boldsymbol x,\\
\mathcal{P}_{\mathcal{G}}^{(2)}K(\boldsymbol x_1,\boldsymbol x_2)&=\sum_{j=1}^q e_j(\boldsymbol x_2)\int_{\Omega}K(\boldsymbol x_1,\boldsymbol x)e_j(\boldsymbol x)d \boldsymbol x,\\
\mathcal{P}_{\mathcal{G}}^{(1)}\mathcal{P}_{\mathcal{G}}^{(2)}K(\boldsymbol x_1,\boldsymbol x_2)
&=\sum_{i=1}^q \sum_{j=1}^q e_i(\boldsymbol x_1)e_j(\boldsymbol x_2)\int_{\Omega}\int_{\Omega} K(\boldsymbol x,\boldsymbol t)e_i(\boldsymbol x)e_j(\boldsymbol t)d \boldsymbol x d \boldsymbol t.
\end{aligned}
\end{eqnarray}}
Theorem 3.2 in \cite{tuo2019adjustments} proves the positive definiteness of $ K_{\mathcal{G}}$.
\subsection{Projected kernel calibration} 

{ Suppose $\bm\theta^*$ is an interior point of $\Theta$.
A necessary optimality condition for (\ref{thetastar})  to hold is}
 \begin{eqnarray}
\int_{\Omega}\frac{\partial{y^s(\bm x,\bm\theta^{*})}}{\partial{\theta_j}}\delta^*(\bm x)d\bm x=0,
\label{or cond}
\end{eqnarray}
for $j=1,2,\ldots,q$. 
{ For $\bm\theta=(\theta_1,\ldots,\theta_q) \in \Theta$, define 
 \begin{eqnarray}
 \mathcal{G}_{\theta}=span\left\{\frac{\partial y^s(\cdot,\bm\theta)}{\partial \theta_1},\frac{\partial y^s(\cdot,\bm\theta)}{\partial \theta_2},\ldots,\frac{\partial y^s(\cdot,\bm\theta)}{\partial \theta_q}\right\}. 
\label{or space}
\end{eqnarray}}
Let ${K_{\theta}}= K_{\mathcal{G}_{\theta}}$ to be the projected kernel and $\mathcal{N}_{K_{\theta}}(\Omega)$  the reproducing kernel Hilbert space  generated by $K_{\theta}$ \cite{wendland2004scattered}. 
{Then $\delta^*(\cdot)$ is in the orthogonal of $\mathcal{G}_{\theta^*}$ in light of (\ref{defpk}) and (\ref{or cond}).}
By assuming $\delta^*\in \mathcal{N}_{K_{\theta^*}}(\Omega)$,
the projected kernel smoothing estimator $(\hat{\bm\theta}_{PK},\hat\delta_{PK})$ is defined as the minimizer of
 { \begin{eqnarray}
\min_{\bm\theta\in \Theta}\min_{ \delta\in {\mathcal{N}_{K_{\theta}}(\Omega)}}\frac{1}{n}\sum_{i=1}^n\left(y_i-\delta(\bm x_i)-y^s(\bm x_i,\bm\theta)\right)^2+\lambda\|\delta\|^2_{\mathcal{N}_{K_{\theta}}(\Omega)},
\label{estimator}
\end{eqnarray}}
 where $\lambda$ is a tuning parameter, which can be choosen by generalized cross validation (GCV); see \cite{james2013bias}. 
  Following from the representer’s theorem \cite{scholkopf2001generalized, wahba1990spline}, $\hat{\bm\theta}_{PK}$ can also be represented by 
 \begin{eqnarray}
\hat{\bm\theta}_{PK}=\operatorname*{argmin}_{\Theta}(\bm Y-\bm Y^s_{\theta})^{T}(\mathbf K_{\theta}+n\lambda \mathbf{I}_n)^{-1}(\bm Y-\bm Y^s_{\theta}).
\label{krrcalibration}
\end{eqnarray}
where $\mathbf{I}_n$ is the identity matrix, $\bm{Y}_{\theta}^s=\left(y^s(\bm{x}_1,\bm\theta),\ldots,y^s(\bm{x}_{n},\bm\theta)\right)^{T}$, and $ \mathbf K_{\theta}=[K_{\theta}(\bm x_i,\bm x_j)]_{1\leq i,j\leq n}$.   
Theorem 4.3 of \cite{tuo2019adjustments} shows that the projected kernel estimation $\hat{\bm\theta}_{PK}$ is asymptotically normally distributed and there does not exist a regular estimator with even smaller asymptotic variance than $\hat{\bm\theta}_{PK}$.
 
 A bayesian interpretation for the projected kernel calibration is also given in \cite{tuo2019adjustments}.
  Suppose $\zeta(\cdot)-y^s(\cdot,\bm\theta)$ is a realization of Gaussian Process $GP(0, \tau^2 K_{\theta})$, where $\tau^2$ is the variance of the covariance function and  $K_{\theta}$ is the correlation function.   Denote the prior distribution of $\bm \theta$ as $\pi(\bm\theta)$, the posterior of $\bm\theta$ can be presented as
  \begin{eqnarray}
  \label{bay}
  \pi(\bm\theta|\bm{Y},\bm{Y}^s)\propto\pi(\bm\theta)\times \exp\left\{-\frac{1}{2}(\bm Y-\bm Y^s_{\theta})^{T}(\mathbf K_{\theta}+n\lambda \mathbf{I}_n)^{-1}(\bm Y-\bm Y^s_{\theta})\right\},
 \end{eqnarray}
 where $n\lambda=\sigma^2/\tau^2$ and  an uninformative prior is used in projected kernel calibration, that is {$\pi(\bm\theta)\propto1$.}

\section{Performance of the projected kernel calibration}
\label{comparison}
In this Section, we {examine} the performance of the projected kernel calibration. In subsection \ref{3.1}, we introduce the PK loss function and discuss the relationship between the local extrema of the $L_2$ loss function and those of the PK loss function. The uniform convergence of 
the PK loss function is proved in subsection  \ref{con.pkloss}, which shows that all 
the local minima of the PK loss function converge to a single value.  In subsection \ref{ex}, the numerical study  in \cite{gu2018scaled} is revisited to validate our theoretical findings.

 \subsection{Local minima of the PK loss function}
 \label{3.1}
The straightforward way to find the local minima of the PK loss function is to 
calculate its first and the second derivatives: the first derivative at any 
local minimum is zero and the Hessian matrix is {positive semi-definite}. To this aim,  
we first rewrite  (\ref{estimator}) to explicit the dependence on $\bm\theta$. 
  \begin{proposition}
  \label{th:equatheta} 
  Define $\delta^{\theta}(\cdot)=\zeta(\cdot)-y^s(\cdot,\bm\theta)$, $\delta_i^{\theta}=y_i-y^s(\bm x_i,\bm\theta)$. 
{  For each $\bm\theta\in \Theta$, let $\hat\delta^{\theta}_{PK}$ to be the  projected kernel smoothing estimator of $\delta^{\theta}$ which is defined as
  \begin{eqnarray}
  \label{deltapkk}
\hat\delta^{\theta}_{PK}=\operatorname*{argmin}_{\delta\in \mathcal{N}_{K_{\theta}}(\Omega)} \frac{1}{n}\sum_{i=1}^n\left(\delta_i^{\theta}-\delta(\bm x_i)\right)^2+\lambda\|\delta\|^2_{\mathcal{N}_{K_{\theta}}(\Omega)}.
\label{krr}
\end{eqnarray}}
 Then $\hat{\bm\theta}_{PK}$ can be written as
\begin{eqnarray}
\hat{\bm\theta}_{PK}=\operatorname*{argmin}_{\Theta}L_{PK}(\bm\theta),
\label{krrc}
\end{eqnarray}
where
  \begin{eqnarray}
L_{PK}(\bm\theta)=\frac{1}{n}\sum_{i=1}^n\left(\delta_i^{\theta}-\hat\delta_{PK}^{\theta}(\bm x_i)\right)^2+\lambda\|\hat\delta_{PK}^{\theta}\|^2_{\mathcal{N}_{K_{\theta}}(\Omega)}.
\label{pkloss}
\end{eqnarray}
The loss function (\ref{pkloss}) is referred to as the \emph{projected kernel loss function} (abbreviated as PK loss function). 

\end{proposition}

{ { From  (\ref{pkloss}),  we can see that} the derivative of the PK loss function depends on the derivatives of $\hat\delta_{PK}^{\theta}$ and $\|\hat\delta_{PK}^{\theta}\|^2_{\mathcal{N}_{K_{\theta}}(\Omega)}$ with respect to $\bm\theta$. 
Since $\hat\delta^{\theta}_{PK}$ belongs to the functional space ${\mathcal{N}_{K_{\theta}}(\Omega)}$, which itself depends on $\bm\theta$, it is difficult to evaluate the derivatives of $\hat\delta_{PK}^{\theta}$ and $\|\hat\delta_{PK}^{\theta}\|^2_{\mathcal{N}_{K_{\theta}}(\Omega)}$ with respect to $\bm\theta$ directly. 

{To solve this problem, we look for a function $\hat\delta^{\theta}$ in the functional space $\mathcal{N}_{K}(\Omega)$ such that $\hat\delta^{\theta}_{PK}=\mathcal{P}^{\perp}_{\mathcal{G}_{\theta}}\hat\delta^{\theta}=\hat\delta^{\theta}-\mathcal {P}_{\mathcal{G}_{\theta}}\hat\delta^{\theta}$. }{Here $\mathcal{N}_{K}(\Omega)$  denotes the reproducing kernel Hilbert space  generated by $K$. It is clear that  $\mathcal{N}_{K}(\Omega)$ is not dependent on $\bm\theta$.} 
{A natural choice of $\hat\delta^{\theta}$ is the summation of  $\hat\delta^{\theta}_{PK}$  (\ref{deltapkk}) with a certain function in ${\mathcal{G}_{\theta}}$ to be determined.
Combining the definition of $\hat\delta^{\theta}_{PK}$ with the generalized representation theorem  \cite{scholkopf2001generalized}, Proposition \ref{th:lossforderiv}  defines $\hat\delta^{\theta}$.}

 \begin{proposition}
  \label{th:lossforderiv} 
{   For each $\bm\theta\in \Theta$, define $\hat\delta^{\theta}$ to be the minimizer of the following loss function
\begin{eqnarray}
l_{\theta}(\delta_0)=\frac{1}{n}\sum_{i=1}^n\left(\delta_i^{\theta}-\mathcal{P}_{\mathcal{G}_{\theta}}\delta_0(\bm x_i)\right)^2+\frac{1}{n}\sum_{i=1}^n\left(\delta_i^{\theta}-\mathcal{P}^{\perp}_{\mathcal{G}_{\theta}}\delta_0(\bm x_i)\right)^2+\lambda\|\mathcal{P}^{\perp}_{\mathcal{G}_{\theta}}\delta_0\|^2_{\mathcal{N}_{K_{\theta}}(\Omega)},
\end{eqnarray}
where $\delta_0 \in \mathcal{N}_K(\Omega)$. 
Assuming $\mathcal{G}_{\theta}$ (\ref{or space}) is  a subspace of $\mathcal{N}_K(\Omega) $, 
 then 
$$\hat\delta^{\theta}_{PK}=\mathcal{P}^{\perp}_{\mathcal{G}_{\theta}}\hat\delta^{\theta}=\hat\delta^{\theta}-\mathcal {P}_{\mathcal{G}_{\theta}}\hat\delta^{\theta}.$$}
\end{proposition}

\begin{remark}From Theorem 3.3 in   \cite{tuo2019adjustments},  
{ $ \mathcal{N}_K(\Omega)$ is the direct sum of $ \mathcal{N}_{K_{\theta}}(\Omega)$ and $\mathcal{G}_{\theta}$.} Using this fact, 
one may easily prove that $\hat\delta^{\theta}_{PK}$  (\ref{deltapkk}) is a minimizer of the loss function $l_{\theta}$ in  $ \mathcal{N}_{K_{\theta}}(\Omega)$. 
\end{remark}
  
 {According to  Proposition   \ref{th:lossforderiv}, the calculation of the derivatives of $\hat\delta^{\theta}_{PK}$ is transformed into the calculation of the derivatives of $\hat\delta^{\theta}$ and $\mathcal {P}_{\mathcal{G}_{\theta}}\hat\delta^{\theta}$ on $\bm\theta$.
To establish the asymptotic behavior of the  derivatives of the PK loss function on $\bm\theta$, we first analyze how accurately  $\hat\delta^{\theta}$ may approximate  $\delta^{\theta}$ in a uniform sense. }

    \begin{proposition}
\label{condelta}
 Define the empirical norm of $g\in \mathcal{N}_{K_{\theta}}(\Omega)$ as $\|g\|^2_n=\frac{1}{n}\sum_{i=1}^n g^2(\bm x_i)$. Then under the following hypotheses

A1. $x_i$'s are independent random samples from the uniform distribution over $\Omega$.


A2. $\mathcal{N}_K(\Omega)$ can be continuously embedded into the (fractional) Sobolev space $H^m(\Omega)$ with $m>d/2$.

A3. It holds that
 $$\sup_{{\bm \theta\in\Theta},j=1,2,\ldots,q}\left\{\|\frac{\partial y^s(\cdot,\bm\theta)}{\partial \theta_j}\|_{\mathcal{N}_K(\Omega)}/\|\frac{\partial y^s(\cdot,\bm\theta)}{\partial \theta_j}\|_{L_2(\Omega)}\right\}<\infty,$$
  $$\sup_{{\bm \theta\in\Theta},j=1,2,\ldots,q}\left\{\|\frac{\partial y^s(\cdot,\bm\theta)}{\partial \theta_j}\|_{L_2(\Omega)}\right\}<\infty,$$
 and
 $$\sup_{\bm \theta\in\Theta}\|y^s(\cdot,\bm\theta)\|_{\mathcal{N}_K(\Omega)}<\infty.$$

 A4. Define the matrix $\mathbf E_{\theta}$ as
  \begin{eqnarray} 
  \label{etheta}
 \mathbf E_{\theta}=[\int \frac{\partial{y^s(\bm x,\bm \theta)}}{\partial{\theta_i}}\frac{\partial{y^s(\bm x,\bm \theta)}}{\partial{\theta_j}}d\bm x]_{1\leq i,j\leq q}.
  \end{eqnarray} 
Assume   $\inf_{\bm \theta\in\Theta} \lambda_{\min}(\mathbf E_{\theta})>0$, where $\lambda_{\min}(\mathbf E_{\theta})$ is the minimum eigenvalue of $\mathbf E_{\theta}$. 

 We have that if $\lambda\sim n^{-\frac{2m}{2m+d}}$, 
 \begin{eqnarray} 
    \begin{aligned}
   & \sup_{\boldsymbol \theta\in\Theta}\left\|{\delta}^{\theta}-\hat{\delta}^{\theta}\right\|_{L_2(\Omega)}=O_p(n^{-\frac{m}{2m+d}}),\\
& \sup_{\boldsymbol \theta\in\Theta}\| \hat \delta^{\theta}_{PK}\|_{\mathcal{N}_{K_{\theta}}(\Omega)}=O_p(1).
     \end{aligned}
      \end{eqnarray}

 \end{proposition}}

\begin{remark}
Condition A2 can be met easily if $K$ is chosen to be a Mat\'ern kernel (\ref{matern f}). By the Corollary 1 of \cite{tuo2016theoretical}, the reproducing kernel Hilbert space generated by the Mat\'ern kernel  is equal to the (fractional) Sobolev space $H^{\nu+d/2}(\Omega)$, with equivalent norms. 
\end{remark}

\begin{corollary}
\label{col3.6}
Under the conditions of Proposition \ref{condelta}, we have that  if $\lambda\sim n^{-\frac{2m}{2m+d}}$, then
 $$\sup_{\boldsymbol \theta\in\Theta}\left\|\mathcal{P}^{\perp}_{\mathcal{G}_{\theta}} {\delta}^{\theta}-\hat{\delta}^{\theta}_{PK}\right\|_{L_2(\Omega)}=O_p(n^{-\frac{m}{2m+d}}).$$
\end{corollary}

{ It is worth noticing that    Proposition \ref{condelta}  is different from  Theorem  4.1 in  \cite{tuo2019adjustments}. Let $\hat\zeta_{PK}(\cdot)=\hat\delta_{PK}^{\hat{ \theta}_{PK}}(\cdot)+y^s(\cdot,\hat{\bm \theta}_{PK})$ be the predictor given by the projected kernel calibration. Theorem  4.1 in  \cite{tuo2019adjustments}  shows that, under certain conditions, there is $\|\hat\zeta_{PK}-\zeta\|_{{L_2(\Omega)}}=O_p(n^{-\frac{m}{2m+d}})$. It is  an immediate consequence of the combination of Corollary \ref{col3.6} and  the asymptotic normal of the projected kernel calibration. Let us briefly explain the reasons. 

By the triangle inequality, $ \|\hat\zeta_{PK}-\zeta\|_{{L_2(\Omega)}}$ can be { bounded from above }
 \begin{eqnarray*}
 \begin{aligned}
&\left\|\hat\delta_{PK}^{\theta^*}-\delta^*\right\|_{{L_2(\Omega)}}+ \left\{\left \| \hat\delta_{PK}^{\hat{ \theta}_{PK}}-\hat\delta_{PK}^{\theta^*}\right\|_{{L_2(\Omega)}}+\left\|y^s(\cdot,\hat{\bm \theta}_{PK})-y^s(\cdot,\bm\theta^*)\right\|_{{L_2(\Omega)}}\right\}.
 \end{aligned}
  \end{eqnarray*}
Next, we bound these two terms separately. Because $\delta^*= {\delta}^{\theta^*}\in{\mathcal{G}^{\perp}_{\theta^*}}$, from Corollary 3.6, the fist term is equal to $O_p(n^{-\frac{m}{2m+d}})$. 
 It follows from Taylor's theorem, Cauchy-Schwarz inequality, and the asymptotic normal of the projected kernel calibration that,  the second term is equal to 
$O_p(n^{-1/2})$. The desired result of  Theorem  4.1 in  \cite{tuo2019adjustments} can be easily obtained by the  summation of these two bounds. }
 
       \begin{theorem}
  \label{th:stationary points} 
Under the conditions of Proposition \ref{condelta},  we have   that  
{ \begin{eqnarray} 
  \label{un-con}
\sup_{\bm\theta\in\Theta} \left|\frac{\partial L_{PK}(\bm\theta)}{\partial {\theta_j}} - \frac{\partial {\bm a^T_{\theta}\mathbf E^{-1}_{\theta} \bm a_{\theta}}}{\partial \theta_j}\right|=O_p(n^{-\frac{m}{2m+d}}), j=1,2,\ldots,q,
   \end{eqnarray} }
{ where  $\bm a^{T}_{\theta}=<\delta^{\theta}(\cdot),\frac{\partial y^s(\cdot,\bm\theta)}{\partial\bm\theta}>_{L_2(\Omega)}$}.

Let $\bm\theta^s$ to be a stationary point of the $L_2$ loss function (\ref{l2loss}),  then the Hessian matrix of $L_{PK}(\bm\theta)$ at $\bm\theta^s$, denotes as $\mathbf H_{PK}(\bm\theta^s)$ may be written as
 \begin{eqnarray} 
    \begin{aligned}
\mathbf H_{PK}(\bm\theta^s)=2\frac{\partial {\boldsymbol a^T_{\boldsymbol\theta^s}}}{\partial{\bm \theta}}\mathbf E^{-1}_{\boldsymbol\theta^s}\frac{\partial {\boldsymbol a^T_{\boldsymbol\theta^s}}}{\partial{\bm \theta}}+O_p(n^{-\frac{m}{2m+d}}).
    \end{aligned}
    \end{eqnarray}
\end{theorem}

\begin{corollary}
\label{cor8}
Under the condition of Proposition \ref{condelta}, we have that if $\bm\theta^s$ is a local maxima or local minima of { the $L_2$ loss function (\ref{l2loss})}, then $\lim_{n\rightarrow \infty} \frac{\partial L_{PK}(\bm\theta)}{\partial \bm \theta}|_{\bm\theta^s}=\bm 0$, i.e. $\bm\theta^s$ is a stationary point of the PK loss function. 
 \end {corollary}
 
Corollary \ref{cor8} is an immediate consequence of  (\ref{un-con}), following from  
the vanishing of the first derivative of $L_2(\bm\theta)$ at $\bm\theta^s$ is zero, which itself is equivalent to $\bm a_{\bm\theta^s}=\bm {0}$.  To check whether $\bm\theta^s$ is a local maximum or a local minimum of the PK loss function, let us study the sign of the Hessian matrix  $\mathbf H_{PK}(\bm\theta^s)$.
Since $\mathbf E^{-1}_{\boldsymbol\theta^s}$ is a positive definite matrix, then the matrix $ \frac{\partial {\boldsymbol a^T_{\boldsymbol\theta^s}}}{\partial{\bm \theta}}\mathbf E^{-1}_{\boldsymbol\theta^s}\frac{\partial {\boldsymbol a^T_{\boldsymbol\theta^s}}}{\partial{\bm \theta}}$ is positive semi-definite. When the sample size $n$ tends to infinity, we can easily prove the following corollary.
 \begin{corollary}
  \label{th:localmin} 
  Under the condition of Proposition \ref{condelta}, we have that if $\bm\theta^s$ is a local maxima or local minima of the $L_2$ loss function (\ref{l2loss}), then
the Hessian matrix of the PK loss at $\bm\theta^s$ is positive semi-definite. 
  \end{corollary}

 Corollary \ref{th:localmin} shows that the local optimal points of  the $L_2$ loss function are local minima of  the PK loss function. 
Specifically, suppose that the $L_2$ loss function has $l$ local optima, denoted as $\{\bm\theta^s_1,\ldots,\bm\theta^s_l\}$. If $l=1$,  then $\bm\theta^s_1$ is the global minimum of $L_{PK}$. If $l\geq 2$,  then $\bm\theta^s_1,\ldots,\bm\theta^s_l$ are local minima of $L_{PK}$.

    \subsection{Convergence of the PK loss function}
  \label{con.pkloss}
{ From Theorem 10.7 in \cite{carothers2000real}, we have that Theorem   \ref{th:stationary points}  alone does not indicate uniform convergence of the PK loss function. Another necessary condition is that the PK loss function converges at at least one point.
  Based on the definition of $L_{PK}(\bm\theta)$ (\ref{pkloss}), 
     \begin{eqnarray}
     \begin{aligned}
     \label{lpkstar}
  L_{PK}(\bm\theta^*)
  =&\frac{1}{n}\sum_{i=1}^n\left(\delta^{*}(\bm x_i)-\hat\delta_{PK}^{\theta^*}(\bm x_i)\right)^2+\frac{2}{n}\sum_{i=1}^n\epsilon_i\left(\delta^{*}(\bm x_i)-\hat\delta_{PK}^{\theta^*}(\bm x_i)\right)+\\
 &\frac{1}{n}\sum_{i=1}^n\epsilon^2_i+\lambda\|\hat\delta_{PK}^{\theta^*}\|^2_{\mathcal{N}_{K_{\theta^*}}(\Omega)}.
      \end{aligned}
       \end{eqnarray}
Combining  (\ref{lpkstar}),  (\ref{semi2}), (\ref{ep}) and Proposition \ref{condelta}, we have that if $\lambda\sim n^{-\frac{2m}{2m+d}}$, there is
  $L_{PK}(\bm\theta^*)=\frac{1}{n}\sum_{i=1}^n\epsilon_i^2+O_p(n^{-\frac{2m}{2m+d}})<\infty.$} Let us now denote $C_1=\lim_{n \rightarrow \infty}L_{PK}(\bm\theta^*)=\sigma^2<\infty$. The uniform convergence of $\frac{\partial L_{PK}(\bm\theta)}{\partial {\theta_j}},j=1,\ldots,q$ and the convergence of $L_{PK}(\bm\theta^*)$ guarantee the uniform convergence of $L_{PK}(\bm\theta)$, that is
{\begin{eqnarray} 
   \label{con.delta.1}
\lim_{n\rightarrow\infty}\sup_{\bm\theta\in\Theta} \left |{ L_{PK}(\bm\theta)} - {\bm a^T_{\theta}\mathbf E^{-1}_{\theta} \bm a_{\theta}}-C_1\right |=0.
   \end{eqnarray} }
Since $\mathcal{P}_{\mathcal{G}_{\theta}}{\delta^{\theta}}=\bm a^{T}_{\theta}\mathbf E^{-1}_{\theta} \frac{\partial y^s(\cdot,\bm\theta)}{\partial \bm\theta^T} $, it follows from basic linear algebra that
   \begin{eqnarray} 
   \label{pkdelta}
   \bm a^T_{\theta}\mathbf E^{-1}_{\theta} \bm a_{\theta}=<\delta^{\theta},\mathcal{P}_{\mathcal{G}_{\theta}}{\delta^{\theta}}>_{L_2(\Omega)}=\|\mathcal{P}_{\mathcal{G}_{\theta}}{\delta^{\theta}}\|^2_{L_2(\Omega)}.
     \end{eqnarray} 
Upon combining  (\ref{con.delta.1}) and  (\ref{pkdelta}), {Theorem \ref{pro.uniform.con}}  shows the uniform convergence of $L_{PK}(\bm\theta)$.

  \begin{theorem}
    \label{pro.uniform.con} 
 Under the condition of Proposition \ref{condelta},  we have 
  that 
{ \begin{eqnarray} 
   \label{con.delta.2}
\lim_{n\rightarrow \infty}\sup_{\bm\theta\in\Theta} \left |{ L_{PK}(\bm\theta)} - \|\mathcal{P}_{\mathcal{G}_{\theta}}{\delta^{\theta}}\|^2_{L_2(\Omega)}-C_1\right |=0.
   \end{eqnarray} }
      $\|\mathcal{P}_{\mathcal{G}_{\theta}}{\delta^{\theta}}\|^2_{L_2(\Omega)}+C_1$ is termed the \emph{projected kernel $L_2$ loss function} (PKL2 loss function).

   \end{theorem}

Since $\delta^{\theta^s_i}$ is orthogonal to $\mathcal{G}_{\theta^s_i}$, Theorem \ref{pro.uniform.con} shows that $\lim _{n\rightarrow \infty}L_{PK}(\bm\theta^s_i)=C_1$ for any $i=1,\ldots, l$.  
  That is, all the local minima of the PK loss functions approach the same value. 
 This implies that the projected kernel calibration method may easily get stuck in local minima or even local maxima of  the $L_2$ loss function.     

  \subsection{Revisit the Example 2 in  \cite{gu2018scaled}}
\label{ex}
To validate our theoretical assertions, the numerical study in  \cite{gu2018scaled} is revisited in this subsection. Example 2 in  \cite{gu2018scaled} shows that $L_{PK}$ tends to have more local optimal points than the $L_2$ loss function. In this example, the smoothness of discrepancy function is infinite, whereas \cite{gu2018scaled} choose a  Mat\'ern kernel function (\ref{matern f}) with $\nu=1/2$ to estimate the discrepancy function.  To avoid the effect of inaccuracy of the correlation function, we make some modifications to this example. 

Suppose the true process is $$\zeta(x)=x \cos(3x/2)+x, x\in [0,5].$$ 
 The computer model is $$y^s(x,\theta)=\sin(\theta x)+\exp(-2|x|), \theta\in [0,3].$$
By the definition of $\theta^*$ (\ref{thetastar}), {we find that } $\theta^*= 0.371$. Because the smoothness of $\zeta(\cdot)-y^s(\cdot,\theta)$ depends on the smoothness of $\exp(-2|x|)$, we have that the discrepancy function is in the reproducing kernel Hilbert space generated by the Mat\'ern kernel function $K(h;\frac{1}{2},\frac{1}{2})$, which is equal to the Sobolev space $H^{1}(\Omega)$. 
 
 Let $\{x_1, x_2,\ldots, x_{n}\}$ be the set of the design points, where $x_i=-5+\frac{10(i-1)}{n-1}, i=1,2,\ldots,n$ and $n=100$ is the sample size. Suppose the observation error $\epsilon_i$'s are mutually independent and follow from $N(0,0.2^2)$. 
 By Proposition \ref{condelta}, set the tuning parameter $\lambda=\beta n^{-\frac{2}{3}}$ and $\beta$ is chosen by $10$-fold cross validation method \cite{james2013bias}. With the help of \emph{caret }Package \cite{kuhn2015short} in R, we have that $\beta=0.00138$. To illustrate the performance of the projected kernel calibration more clearly, we scale these two loss function to $[0,1]$ respectively. The scaled value of loss function $L$ is defined as
 \begin{eqnarray}
 \frac{L(\bm\theta)-\min_{\Theta}L(\bm\theta)}{\max_{\Theta}L(\bm\theta)-\min_{\Theta}L(\bm\theta)}.
\end{eqnarray}

\begin{figure}[htbp]
  \centering
  \includegraphics[scale=0.65]{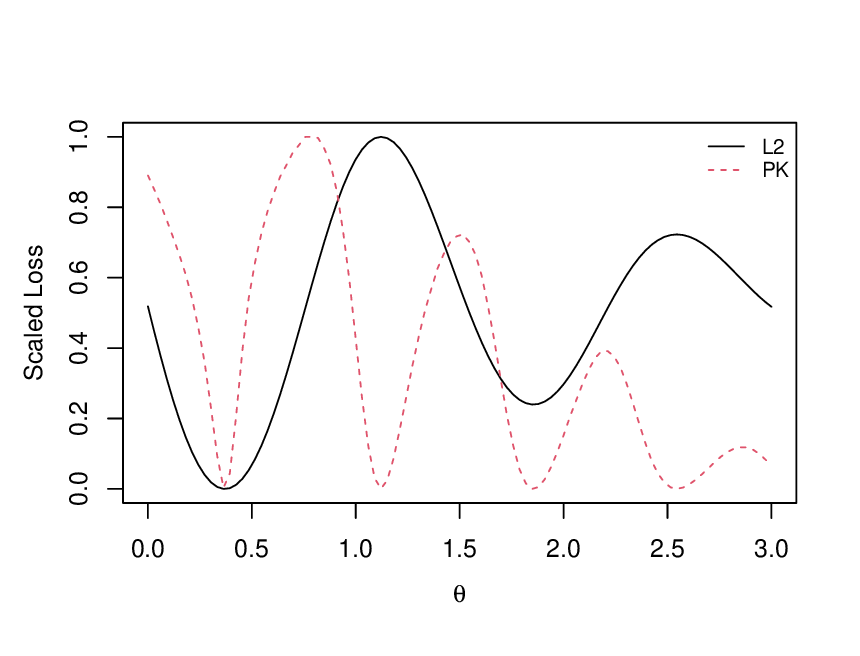}
  \caption{PK loss function (red dashed line) v.s. $L_2$ loss function (\ref{l2loss}) (black real line).}
  \label{fig:compare}
\end{figure}

 Figure \ref{fig:compare} shows the comparison between the PK loss function and the $L_2$ loss function when the sample size $n=100$. The scaled $L_2$ loss over $ \theta\in [0, 3]$ is shown in the black solid line, which contains a global minimum at $0.371$, a local minimum at $1.855$, two local maxima at $1.122$ and $2.545$ respectively. The PK loss function (dashed line) has a global minimum at $0.371$, and three local minima at the three local optimal points of the $L_2$ loss function.

\begin{figure}[htbp]
  \centering
  \includegraphics[scale=0.65]{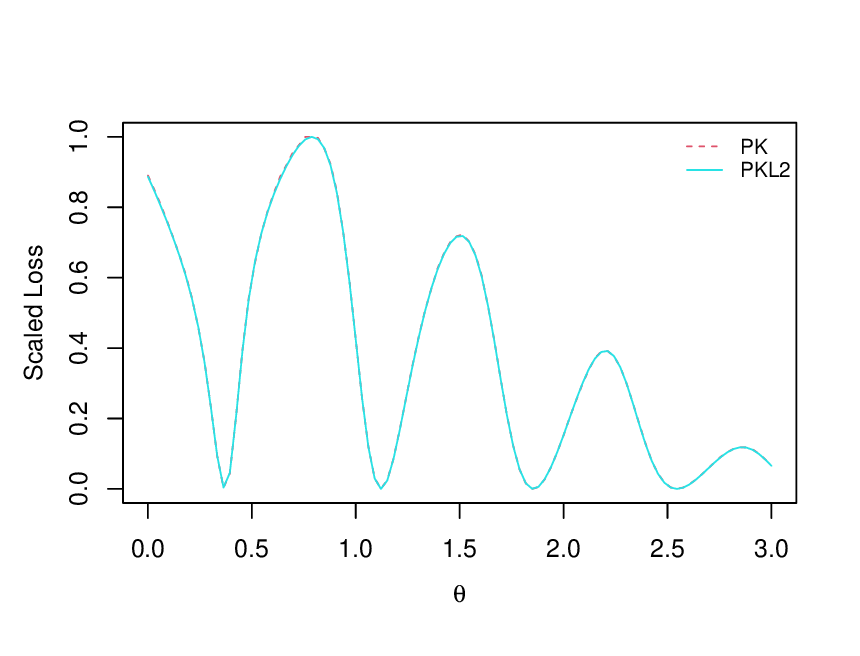}
  \caption{PK loss function (red dashed line) v.s. PKL2 loss function (blue real line).}
  \label{fig:conpk}
\end{figure}
 Figure \ref{fig:conpk} compares the PK loss function with the PKL2 loss function when the sample size $n=100$. We can see that the red dashed line is completely coincident with the green real line.  It indicates the PK loss function converges to the PKL2 loss function when the sample size is $100$.

\section{Penalized projected kernel calibration}
\label{sec4}
The uniform convergence of the PK loss function leads to the failure of any kind of global optimization. To overcome the problem, and inspired by \cite{gu2018scaled}, we rescale 
the $L_2$ norm of the discrepancy function, and introduce a penalized projected kernel calibration method. 
\subsection{Methodology}
We define the penalized projected kernel estimator of $\bm\theta$ as
 \begin{eqnarray}
\hat{\bm\theta}_{PPK}=\operatorname*{argmin}_{\Theta} \left\{L_{PK}(\bm\theta)+\eta\|\delta^{\theta}\|^2_{L_2(\Omega)}\right\},
\label{bpkloss2}
\end{eqnarray}
{ where $\eta>0$ is a tuning parameter to balance the PK loss and the $L_2$ loss.}
As $\delta^{\theta}$ is unknown, a natural choice is to replace $\delta^{\theta}$ by its estimator. 
Since Proposition \ref{condelta} guarantees that a proper choice to estimate $\delta^{*}$
is $\hat\delta_{PK}^{\theta^*}$, we propose to 
evaluate $\hat{\bm\theta}_{PPK}$ by  minimizing the following loss function
 \begin{eqnarray} 
    \begin{aligned}
L_{PPK}(\bm\theta)=L_{PK}(\bm\theta)+\eta\|\hat\delta_{PK}^{\theta}\|^2_{L_2(\Omega)}.
    \end{aligned}
    \label{ppkloss}
    \end{eqnarray}
We refer to the loss function (\ref{ppkloss}) as the \emph{penalized projected 
kernel loss function} (abbreviated as PPK loss function).  
By some direct calculations, we have that
 \begin{eqnarray}
 \label{ppkestimator}
\hat{\bm\theta}_{PPK}=\operatorname*{argmin}_{\Theta}(\bm Y-\bm Y^s_{\theta})^{T}\Sigma^{-1}(\bm Y-\bm Y^s_{\theta}),
 \end{eqnarray}
where  $\Sigma^{-1}=\left[(\mathbf K_{\theta}+n\lambda \mathbf{I}_n)^{-1}+\gamma(\mathbf K_{\theta}+n\lambda \mathbf{I}_n)^{-1}\|K_{\theta}(\bm x,\bm X)\|^2_{L_2(\Omega)}(\mathbf K_{\theta}+n\lambda \mathbf{I}_n)^{-1}\right]$ and $\gamma=\eta/\lambda$. 
This expression provides a natural Bayesian interpretation of the 
penalized projected kernel calibration, i.e.
 \begin{eqnarray}
 \label{bayes}
  \pi(\bm\theta|\bm{Y},\bm{Y}^s)\propto\pi(\bm\theta)\times \exp\left\{-\frac{1}{2}(\bm Y-\bm Y^s_{\theta})^{T}(\mathbf K_{\theta}+n\lambda \mathbf{I}_n)^{-1}(\bm Y-\bm Y^s_{\theta})\right\},
 \end{eqnarray}
where 
\begin{eqnarray}
\pi(\bm\theta)\propto\exp\left\{-\gamma/2\int_{\Omega}\left(\hat\delta^{\theta}_{PK}(\bm x)\right)^2d\bm x\right\}.
\end{eqnarray}
We can easily make a comparison between the projected kernel calibration and the proposed calibration from their Bayesian interpretations  (\ref{bay}) and (\ref{bayes}). Given the definition of $\bm\theta^*$, the estimation of $\bm\theta$ favors to values where $\|\hat\delta^{\theta}_{PK}\|^2_{L_2(\Omega)}$ is small. In turn, 
the prior distirbution $\pi(\bm\theta)$ of the proposed method is inversely 
proportional to $\|\hat \delta^{\theta}_{PK}\|^2_{L_2(\Omega)}$, which appears more suitable than the uninformative prior used in (\ref{bay}).

\subsection{Asymptotic properties}
In this section, we investigate the asymptotic properties of the proposed estimator. 
We first address the number of local minima of the PPK loss function, and then show that under certain conditions the proposed estimator of $\bm\theta$ is semi-parametrically efficient. Finally, we assess the predictive power of the proposed 
method in estimating the true process $\zeta(\cdot)$.

 \begin{theorem}
 \label{th:stationary points ppk} 
 Under the conditions of Proposition \ref{condelta}, suppose there exist constants $U\geq L$, such that
$$U\mathbf E_{\theta^s}\geq -\frac{\partial^2 L_2(\bm\theta)}{\partial\bm\theta\partial\bm\theta^T}|_{\bm\theta=\bm\theta^s}\geq L\mathbf E_{\theta^s}.$$
{If $\eta\in\Gamma_{\eta}$, where $\Gamma_{\eta}\subset (0,\infty)$ is an interval determined by $U$ and $L$ and 
the specific form of  $\Gamma_{\eta}$   is given in (\ref{gamma}),
we have that asymptotically (i.e. for the sample size $n\rightarrow \infty$) $\bm\theta^s$ is a local minimum (maximum) of PPK loss function if  $\bm\theta^s$ is a local minimum (maximum) of the $L_2$ loss function.}
\end{theorem}

The theorem means that by choosing an appropriate value of $\eta$, we may avoid 
the problem of having too many local minima. Next, we turn to the asymptotic 
properties of  $\hat {\boldsymbol\theta}_{PPK}$. 

\begin{theorem}
\label{consistppk}
In addition to the assumptions of Proposition \ref{condelta}, assume that
  

B1.  The matrix
$$\mathbf{V}=\int_{\Omega}\frac{\partial^2}{\partial {\bm\theta}^T\partial{\bm\theta}}\left(\zeta(\bm x)-y^s(\bm x,\bm\theta^*)\right)^2 d\bm x$$
is positive definite.

B2.  There exists a neighborhood of $\bm\theta^*$, denoted as $\Theta'$, satisfying
 $$\sup_{{\bm \theta\in\Theta'},j=1,2,\ldots,q}\left\{\|\frac{\partial y^s(\cdot,\bm\theta)}{\partial \theta_j}\|_{\mathcal{N}_{K_{\theta}}(\Omega)}\right\}<\infty,$$
   $$\sup_{{\bm \theta\in\Theta'},1\leq i,j\leq q}\left\{\|\frac{\partial^2 y^s(\cdot,\bm\theta)}{\partial \theta_i\partial \theta_j}\|_{\mathcal{N}_{K_{\theta}}(\Omega)}\right\}<\infty.$$
 Then we have 
\begin{eqnarray} 
\hat {\boldsymbol\theta}_{PPK}-\boldsymbol\theta^*=2\mathbf{V}^{-1}\left\{\frac{1}{n}\sum_{i=1}^n \epsilon_i\frac{\partial y^s(\boldsymbol x_i,\boldsymbol\theta^*)}{\partial \boldsymbol\theta}\right\}+o_p(n^{-1/2}).
\end{eqnarray} 
\end{theorem}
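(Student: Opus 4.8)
The plan is to treat $\hat{\boldsymbol\theta}_{PPK}$ as an M-estimator and derive its asymptotically linear representation in the standard three moves: consistency, reduction to the first-order condition, and a Taylor expansion of the score about $\boldsymbol\theta^*$. First I would establish $\hat{\boldsymbol\theta}_{PPK}\to\boldsymbol\theta^*$ in probability. By Theorem \ref{pro.uniform.con} together with the uniform convergence $\sup_{\boldsymbol\theta}|\,\|\hat\delta^{\theta}_{PK}\|^2_{L_2(\Omega)}-\|\delta^{\theta}\|^2_{L_2(\Omega)}\,|\to 0$ (a consequence of Theorem \ref{condelta} and (\ref{semi2})), the PPK loss converges uniformly to $\|\mathcal{P}_{\mathcal{G}_{\theta}}\delta^{\theta}\|^2_{L_2(\Omega)}+\eta\|\delta^{\theta}\|^2_{L_2(\Omega)}+C$. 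At $\boldsymbol\theta^*$ the orthogonality constraint (\ref{or cond}) forces $\mathcal{P}_{\mathcal{G}_{\theta^*}}\delta^{\theta^*}=0$, so the first term vanishes, while the penalty term is minimized at $\boldsymbol\theta^*$ by the definition (\ref{thetastar}); under B1 this limit has $\boldsymbol\theta^*$ as its unique minimizer, which yields consistency by the usual argmin argument. Since $\boldsymbol\theta^*$ is interior (B1), for large $n$ the minimizer is interior as well and satisfies $\nabla_{\boldsymbol\theta}L_{PPK}(\hat{\boldsymbol\theta}_{PPK})=\boldsymbol 0$.

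Next I would Taylor expand the score: there is $\tilde{\boldsymbol\theta}$ between $\hat{\boldsymbol\theta}_{PPK}$ and $\boldsymbol\theta^*$ with $\boldsymbol 0=\nabla L_{PPK}(\boldsymbol\theta^*)+\nabla^2L_{PPK}(\tilde{\boldsymbol\theta})(\hat{\boldsymbol\theta}_{PPK}-\boldsymbol\theta^*)$, hence $\hat{\boldsymbol\theta}_{PPK}-\boldsymbol\theta^*=-[\nabla^2L_{PPK}(\tilde{\boldsymbol\theta})]^{-1}\nabla L_{PPK}(\boldsymbol\theta^*)$. The two ingredients are then the limiting Hessian and the leading stochastic term of the score. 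For the Hessian, I would use consistency of $\tilde{\boldsymbol\theta}$ and Theorem \ref{condelta} (so that $\hat\delta^{\theta}_{PK}$ and its first two $\boldsymbol\theta$-derivatives, controlled under B3, track $\delta^{\theta}$) to show $\nabla^2L_{PPK}(\tilde{\boldsymbol\theta})\to V$; the penalty term is precisely what makes the limiting Hessian the full $L_2$ Hessian $V$ of B2 rather than its projected counterpart, and B2 guarantees $V$ is positive definite and thus invertible.

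The heart of the argument is the score. Writing $\boldsymbol b^T_{\theta}=\langle\hat\zeta(\cdot)-y^s(\cdot,\boldsymbol\theta),\partial y^s(\cdot,\boldsymbol\theta)/\partial\boldsymbol\theta\rangle_{L_2(\Omega)}$ as in (\ref{dlpk}) and using $\boldsymbol b_{\theta^*}-\boldsymbol a_{\theta^*}=\langle\hat\zeta-\zeta,\partial y^s/\partial\boldsymbol\theta\rangle_{L_2(\Omega)}$ with $\boldsymbol a_{\theta^*}=\boldsymbol 0$ (stationarity of the $L_2$ loss at $\boldsymbol\theta^*$), the deterministic part of $\nabla L_{PPK}(\boldsymbol\theta^*)$ cancels and the score collapses to a constant multiple of $\boldsymbol b_{\theta^*}=\langle\hat\zeta-\zeta,\partial y^s(\cdot,\boldsymbol\theta^*)/\partial\boldsymbol\theta\rangle_{L_2(\Omega)}$. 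The key step is the asymptotic linearity of this linear functional of the kernel ridge estimator, namely $\langle\hat\zeta-\zeta,\partial y^s(\cdot,\boldsymbol\theta^*)/\partial\theta_j\rangle_{L_2(\Omega)}=\frac{1}{n}\sum_{i=1}^n e_i\,\partial y^s(\boldsymbol x_i,\boldsymbol\theta^*)/\partial\theta_j+o_p(n^{-1/2})$. Combining this with the Hessian limit $V$ produces the stated representation with coefficient $-2V^{-1}$.

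The hard part will be the last step: showing that the linear functional of $\hat\zeta-\zeta$ is $\sqrt{n}$-asymptotically linear with remainder genuinely $o_p(n^{-1/2})$. The difficulty is that the rate supplied by Theorem \ref{condelta} is only $n^{-m/(2m+d)}$, which is slower than $n^{-1/2}$, so a crude bound on the score will not suffice. I would split $\hat\zeta-\zeta$ into its smoothing bias and its stochastic part, and show that the bias, paired against the fixed smooth direction $\partial y^s/\partial\theta_j$ (bounded in $\mathcal{N}_K(\Omega)$ by A4 and B3), contributes only $o_p(n^{-1/2})$, while the stochastic part reproduces the empirical average $\frac{1}{n}\sum_i e_i\,\partial y^s(\boldsymbol x_i,\boldsymbol\theta^*)/\partial\theta_j$ up to $o_p(n^{-1/2})$. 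Here the orthogonality $\boldsymbol a_{\theta^*}=\boldsymbol 0$ is exactly what prevents the slow nonparametric rate from contaminating the finite-dimensional parameter, and the control of the cross terms relies on A1--A3, the sub-Gaussian errors A2, and the empirical-to-$L_2$ norm equivalence (\ref{semi2}). I expect this remainder control, carried out by the same localization used for Theorem \ref{condelta}, to be the most delicate and calculation-intensive part of the proof.
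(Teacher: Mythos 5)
Your three-move skeleton (consistency, first-order condition, Taylor expansion of the score) is a legitimate strategy, and your consistency step is a valid alternative to the paper's: the paper instead argues by contradiction, comparing $L_{PPK}(\bm\theta^*)$ with its values on a sphere of radius $cn^{-\frac{m}{2m+d}}$, invoking the corresponding PK inequality from the proof of Theorem 4.2 of \cite{tuo2019adjustments}, the ordering $\|\delta^{\theta^*}\|_{L_2(\Omega)}\le\|\delta^{\tilde\theta}\|_{L_2(\Omega)}$, and the uniform convergence in Theorem \ref{condelta}. For the normality part the paper never Taylor-expands the profiled loss: it argues that at $\hat{\bm\theta}_{PPK}$ the penalty drops out of the score equation (\ref{asynorm2}), so that $\hat{\bm\theta}_{PPK}$ solves the \emph{same} first-order condition as the PK estimator, and then the representation follows by citing the proof of Theorem 4.3 of \cite{tuo2019adjustments}; that cited proof is exactly the asymptotic-linearity statement you flag as ``the hard part,'' so you would be re-proving it from scratch.

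The genuine gap is in your two intermediate claims about the profiled loss. Because $\hat\delta^{\theta}_{PK}=\hat\delta^{\theta}-\mathcal P_{\mathcal G_{\theta}}\hat\delta^{\theta}$ lies in $\mathcal N_{K_\theta}(\Omega)$ and is orthogonal to $\mathcal G_\theta$, one has $\|\hat\delta^{\theta}_{PK}\|^2_{L_2(\Omega)}=\|\hat\delta^{\theta}\|^2_{L_2(\Omega)}-\bm b^T_\theta\mathbf E^{-1}_\theta\bm b_\theta$, so the penalty does \emph{not} track $\|\delta^\theta\|^2_{L_2(\Omega)}$. Using this (equivalently, the paper's own derivative computations in the proof of Theorem \ref{th:stationary points ppk}) together with $\partial\bm b_\theta/\partial\bm\theta\to -V/2$ at $\bm\theta^*$, the score at $\bm\theta^*$ is the \emph{matrix} multiple $-\left[(1-\eta)V\mathbf E^{-1}_{\theta^*}+2\eta\mathbf I\right]\bm b_{\theta^*}$ plus remainder, not a constant multiple of $\bm b_{\theta^*}$, and the limiting Hessian is $\tfrac{1-\eta}{2}V\mathbf E^{-1}_{\theta^*}V+\eta V$, not $V$; both of your claims are true only in the special case $\eta=1$, whereas $\eta$ is a data-chosen tuning parameter. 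The theorem survives only because of an exact cancellation your sketch never identifies,
\[
\left[\tfrac{1-\eta}{2}V\mathbf E^{-1}_{\theta^*}V+\eta V\right]^{-1}\left[(1-\eta)V\mathbf E^{-1}_{\theta^*}+2\eta\mathbf I\right]=2V^{-1},
\]
which wipes out all $\eta$-dependence. As written, your derivation either combines two mutually compensating false statements (Hessian $=V$, score $=-2\bm b_{\theta^*}$) or, if the score is computed correctly while the Hessian is kept equal to $V$, produces a limit proportional to $\mathbf E^{-1}_{\theta^*}\bm b_{\theta^*}$, which is wrong. To repair the proposal you must either verify the displayed cancellation for general $\eta$ (and still control all remainders at the $o_p(n^{-1/2})$ level, which the crude $O_p(n^{-\frac{m}{2m+d}})$ bounds in (\ref{dlpk}) do not give), or follow the paper and eliminate the penalty from the score equation at the outset.
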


Theorem \ref{consistppk} shows the asymptotic normality of $\hat {\boldsymbol\theta}_{PPK}$:
$$\sqrt{n} (\hat {\boldsymbol\theta}_{PPK}-\boldsymbol\theta^*)\sim N(\bm 0, 4\sigma^2\mathbf{V}^{-1}\mathbf E_{\theta^*}\mathbf{V}^{-1}).$$
It is worth noticing that, the asymptotic representation of  $\hat {\boldsymbol\theta}_{PPK}-\boldsymbol\theta^*$ agrees with  Theorem 4.3 of \cite{tuo2019adjustments}. { It also shows that} the penalized projected kernel calibration is semi-parametrically efficient.

Let $\hat\zeta_{n}(\cdot)=\hat\delta^{\hat{\bm\theta}_{PPK}}_{PK}(\cdot)+y^s(\cdot,\hat{\bm\theta}_{PPK})$, then $\hat\zeta_{n}$ is a natural estimator of  $\zeta$. Theorem \ref{predpower} gives the predictive power of the proposed method.
\begin{theorem}
\label{predpower}
Under the conditions of Theorem \ref{consistppk}, we have 
\begin{eqnarray} 
\label{rateofzeta}
\|\hat\zeta_n-\zeta\|_{L_2(\Omega)}=O_p(n^{-\frac{m}{2m+d}}).
\end{eqnarray} 
\end{theorem}

The rate of convergence in (\ref{rateofzeta}) equals  the minimax rate in the current context \cite{stone1982optimal}.

\section{Addressing computational problems}
\label{comp}
Evaluating $ \hat {\boldsymbol\theta}_{PPK}$ has two major difficulties in practice. The first
problem is the calculations of projected kernels, since it is hard to evaluate $K_{\theta}$ from its definition  (\ref{defpk}).  The second problem is the  choice of $\eta$. We focus on these problems in this section.

\subsection{Calculus for projected kernels}
Let $\bm g^{T}_{\theta}(\cdot)=\frac{\partial y^s(\cdot,\bm\theta)}{\partial\bm\theta}$ and $\bm h_{\theta}(\bm x)=<K(\bm x,\cdot),\bm g_{\theta}(\cdot)>_{L_2(\Omega)}$.  A closed form for $K_{\theta}$ is derived in this subsection. Because $\mathbf E_{\theta}$ is positive definite, it follows from basic linear algebra that
 \begin{eqnarray}
 \begin{aligned}
 \mathcal{P}^{(1)}_{\mathcal{G}_{\theta}}K(\bm x_1,\bm x_2)= \mathcal{P}^{(2)}_{\mathcal{G}_{\theta}}K(\bm x_2,\bm x_1)&=\bm h^{T}_{\theta}(\bm x_2)\mathbf E^{-1}_{\theta} \bm g_{\theta}(\bm x_1),\\
  \mathcal{P}^{(1)}_{\mathcal{G}_{\theta}}\mathcal{P}^{(2)}_{\mathcal{G}_{\theta}}K(\bm x_1,\bm x_2)&= \bm g^T_{\theta}(\bm x_2)\mathbf E^{-1}_{\theta} \mathbf H_{\theta}\mathbf E^{-1}_{\theta} \bm g_{\theta}(\bm x_1),
 \end{aligned}
 \end{eqnarray}
 where $\mathbf H_{\theta}=\int \int K(\bm t_1,\bm t_2)  \bm g_{\theta}(\bm t_1) \bm g^{T}_{\theta}(\bm t_2) d \bm t_1 \bm t_2$.
Let $\bm w_{\theta}(\bm x)=\mathbf H_{\theta}\mathbf E^{-1}_{\theta} \bm g_{\theta}(\bm x)-\bm h_{\theta}(\bm x)$, then $K_{\theta}(\bm x_1,\bm x_2)$  (\ref{defpk}) can be represented as 
  \begin{eqnarray}
  \label{prokernels}
 \begin{aligned}
&K(\bm x_1,\bm x_2)+\bm w_{\theta}(\bm x_1)^{T}\mathbf H^{-1}_{\theta}\bm w_{\theta}(\bm x_2)-\bm h^{T}_{\theta}(\bm x_1)\mathbf H^{-1}_{\theta}\bm h_{\theta}(\bm x_2).
  \end{aligned}
 \end{eqnarray}
  
 Tuo \cite{tuo2019adjustments}  points out that, projected kernel calibration is similar to the Bayesian calibration method proposed by \cite{plumlee2017bayesian}, which is based on an orthogonal Gaussian process (OGP) modeling technique. {The covariance function of an orthogonal Gaussian process which is defined as   
   \begin{eqnarray}
   \label{kor}
K_{or}(\bm x_1,\bm x_2)=K(\bm x_1,\bm x_2)-\bm h^{T}_{\theta}(\bm x_1)\mathbf H^{-1}_{\theta}\bm h_{\theta}(\bm x_2)
 \end{eqnarray}
  is a projected kernel function.}
By comparing  (\ref{prokernels}) with (\ref{kor}), we have that, if and only if $\bm w_{\theta}(\bm x)=\bm 0$, there is $K_{\theta}=K_{or}$. To address the difficult integrations, we refer to \cite{plumlee2017bayesian} and approximate $<f_1,f_2>_{L_2(\Omega)}$ by
  \begin{eqnarray}
  \label{approx}
  \frac{1}{N}\sum_{k=1}^N f_1(\xi_k)f_2(\xi_k),
 \end{eqnarray}
where $\xi_k$'s are independent random samples from the uniform distribution over $\Omega$. 
By the strong law of large numbers,  (\ref{approx}) almost surely converges to $<f_1,f_2>_{L_2(\Omega)}$ as
$N\rightarrow \infty$. Through this approximation, $\bm h_{\theta}(\bm x)$, $\mathbf E^{-1}_{\theta}$ and $\mathbf H_{\theta}$ can be represented as
  \begin{eqnarray}
  \begin{aligned}
  \bm h_{\theta}(\bm x)&= \frac{1}{N}\sum_{k=1}^N K(\bm x,\xi_k)\bm g_{\theta}(\xi_k),\\
   \mathbf E_{\theta}&=\frac{1}{N}\sum_{k=1}^N \bm g_{\theta}(\xi_k) \bm g^{T}_{\theta}(\xi_k),\\
   \mathbf H_{\theta}&=\frac{1}{N^2}\sum_{i=1}^N\sum_{j=1}^N K(\xi_i,\xi_j)  \bm g_{\theta}(\xi_i) \bm g^{T}_{\theta}(\xi_j). 
    \end{aligned}
  \end{eqnarray}

    \subsection{Choice of $\eta$}
 The choice of the tuning parameter $\eta$ affects the number of local optima of $L_{PPK}(\bm\theta)$. In particular, by increasing $\eta$ from $0$ to $\infty$, one may gradually turn $L_{PPK}(\bm\theta)$  from rough to smooth. In this subsection, a BIC-like 
criterion is introduced to choose $\eta$. 

    Let ${\rm RI}(L)$ be an indicator that measure the ruggedness of a loss function $L$, such as the number of local optimal points. This indicator satisfies that:
  \begin{itemize}
\item ${\rm RI}(L)\geq 0$,  and the equality holds if and only if the loss function $L$ is a nonnegative constant;
\item ${\rm RI}(C_2L)={\rm RI}(L)$, where $C_2>0$ is a constant; 
\item ${\rm RI}(L+C_3)={\rm RI}(L)$, where $C_3>0$ is a constant;   
\item If ${\rm RI}(L_1)\leq {\rm RI}(L_2)$ then ${\rm RI}(L_1)\leq {\rm RI} (L_1+L_2)\leq {\rm RI}(L_2)$. 
\end{itemize}

 Theorem \ref{th:localmin} says that when the $L_2$ loss function has more than one local extrema, $L_{PK}(\bm\theta)$ tends to have more local extrema than $ \|\hat\delta_{PK}^{\theta}\|^2_{L_2(\Omega)}$. Therefore,  it is easy to see that, ${\rm RI}(L_{PK}(\bm\theta))\geq {\rm RI}(L_{PPK}(\bm\theta))\geq{\rm RI}( \|\hat\delta_{PK}^{\theta}\|^2_{L_2(\Omega)})$, with $ {\rm RI}(L_{PPK})$ being a decreasing function of $\eta$.

We may thus use a BIC-like criterion to estimate $\eta$ as follows
 \begin{eqnarray}
 \label{bic}
 \eta=\operatorname*{argmin} _{\eta}\log(L_{PK}({\hat{\bm \theta}}_{PPK}))+{\rm RI}(L_{PPK})\log(n)/n.
\end{eqnarray}

 In the field of optimization, there are many indicators that  may be used to measure the smoothness of the objective function  \cite{talbi2009metaheuristics}.  
A natural choice among these indicators is the number of local optima of the objective function. Upon denoting the number of local optima of the loss function $L$ by 
$NLO(L)$, we have that
  \begin{eqnarray}
 NLO(L)=\#\{\bm\theta: \frac{\partial L(\bm\theta)}{\partial\bm\theta^T}=\bm 0\}.
  \end{eqnarray} 
In most cases, one cannot find a closed form for $\frac{\partial L(\bm\theta)}{\partial\bm\theta^T}$. By numerically approximating \cite{stoer2013introduction}  the first derivative of $L$ on $\bm\theta$, one may use Newton-Raphson method \cite{soetaert2009rootsolve} to evaluate the number of local optima of the loss function $L$. If the tuning parameter $\eta$ is chosen according to the $NLO$ index, $L_{PPK}$ is termed as PPK.NLO loss function.
  
 Notice that when the dimension of $\bm\theta$ is large, it is always hard to count the number of local extrema of a loss function. In turn, amplitude {indices} which measure the distribution of local minima of the loss function $L$, are widely used to assess the smoothness of a function \cite{talbi2009metaheuristics}. We employ the following definition 
 \begin{eqnarray}
Amp(L)=\frac{\max L(\bm\theta)-\min L(\bm\theta)}{\int_{\theta}L(\bm\theta)-\min L(\bm\theta)d\bm\theta}.
\end{eqnarray}
The larger is $Amp(L)$, the harder is to find the optimal point for the loss function $L$.  The PPK loss function where $\eta$ is chosen by the $Amp$ index is referred to as 
PPK.Amp loss function.
Let us denote by $\Theta_s=\{\bm\theta_1,\ldots,\bm\theta_{N'}\}$ a discrete set 
of values of $\bm\theta$, where $\bm\theta_k, k=1,\ldots,N'$ is randomly sampled from the uniform distribution over $\Theta$. We approximate $Amp(L)$ by
 \begin{eqnarray}
 \label{disamp}
\frac{\max_{\Theta_s} L(\bm\theta)-\min_{\Theta_s} L(\bm\theta)}{\frac{1}{N'}\sum_{i=1}^{N'} \left[L(\bm\theta)-\min_{\Theta_s} L(\bm\theta)\right]}.
\end{eqnarray}
 
\section{Numerical studies}
\label{sec5}
In this section, we examine the performance of the penalized projected kernel calibration by using two simulated examples and one real case study.   
In subsection \ref{ex1}, we go back to the example already discussed in \ref{ex}, whereas in subsection \ref{ex2}, we study a simulated example with a two-dimensional calibration parameter. 
We compare the performance of the penalized projected kernel calibration with some commonly used calibration methods using samples with sizes.  To ensure a fair comparison, the mean
function, correlation function, as well as the $\{\xi_j\}$'s in the integration 
are the same.
 \subsection{Review of example \ref{ex}}
 \label{ex1}

To assess the performance of the proposed method, we compare the PPK loss function 
with the $L_2$ loss function and the PK loss function when the sample sizes are $n=\{6,15,100\}$. The physical design and kernel function are the same as in \ref{ex}. The tuning parameter $\eta$ in the  PPK loss function is chosen by the BIC-like criterion (\ref{bic}). We use the two quantities $NLO(L)$ and $Amp(L)$ to quantify the smoothness 
of the loss function. 
 \begin{figure}[htbp]
 \centering
 \subfigure[$\zeta( x)$ v.s. $y^s(x,\theta^*)$.]{
\begin{minipage}[t]{0.45\linewidth}
\centering
\includegraphics[width=2.7in]{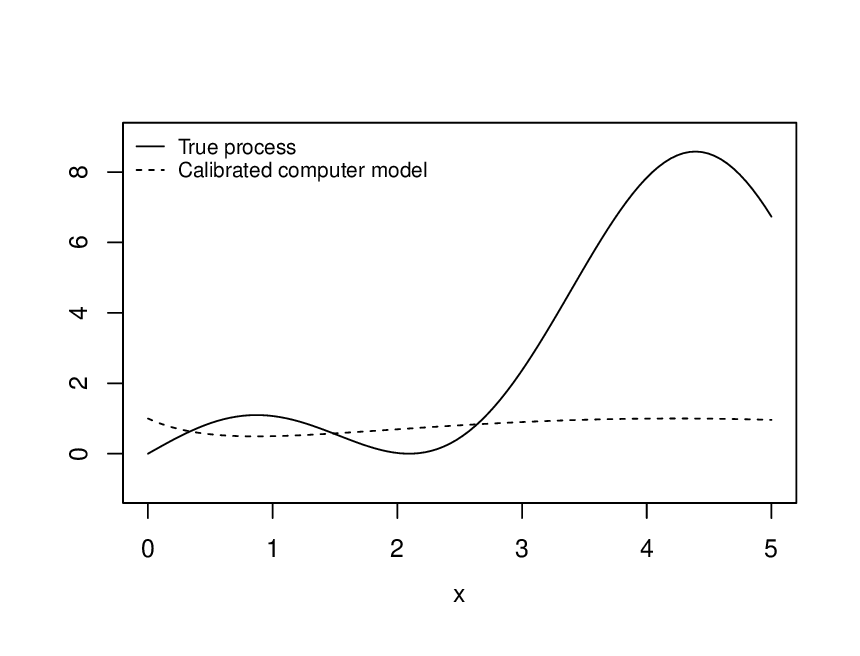}
\end{minipage}%
}%
\centering
\subfigure[$n=6$.]{
\begin{minipage}[t]{0.45\linewidth}
\centering
\includegraphics[width=2.7in]{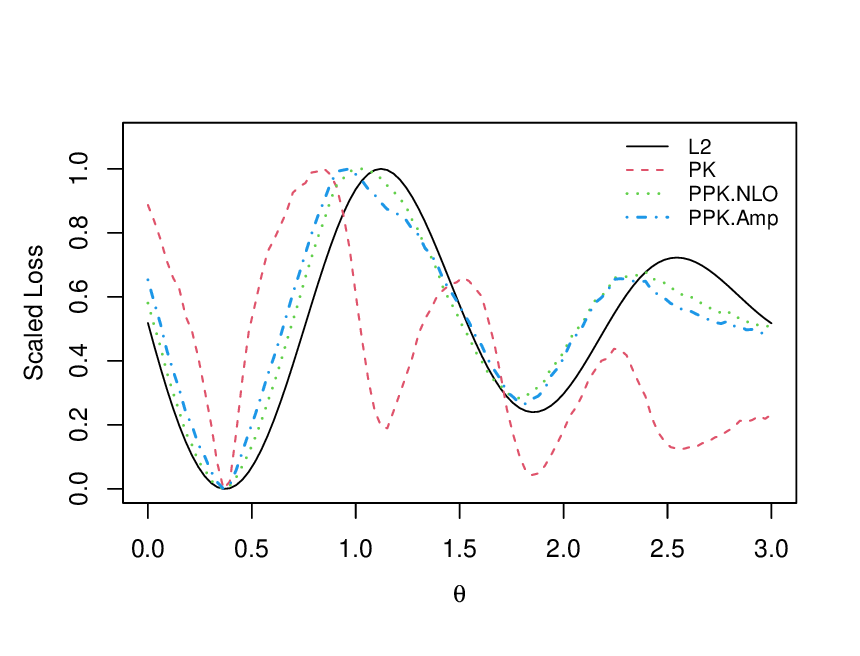}
\end{minipage}%
}%

\subfigure[$n=15$.]{
\begin{minipage}[t]{0.45\linewidth}
\centering
\includegraphics[width=2.7in]{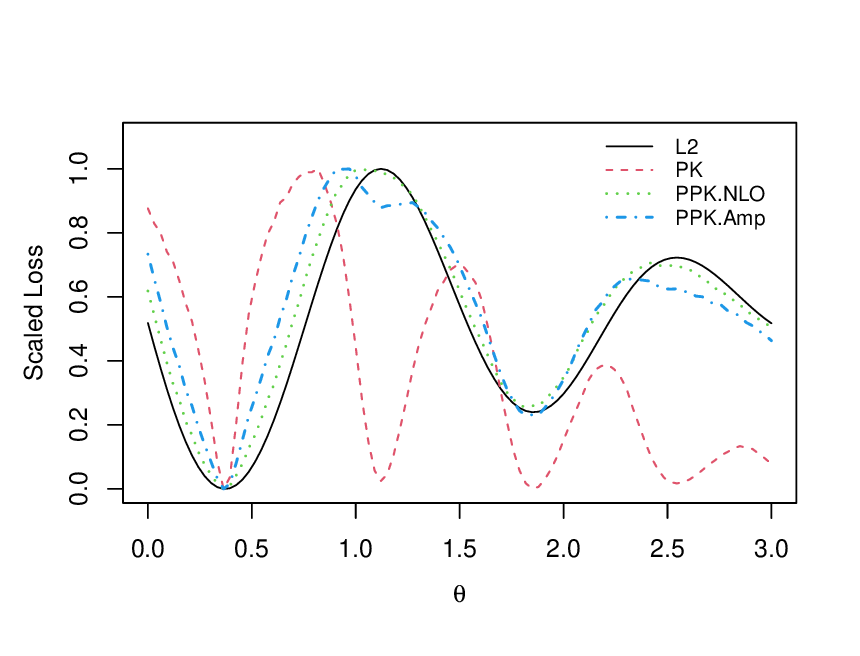}
\end{minipage}%
}%
\subfigure[$n=100$.]{
\begin{minipage}[t]{0.45\linewidth}
\centering
\includegraphics[width=2.7in]{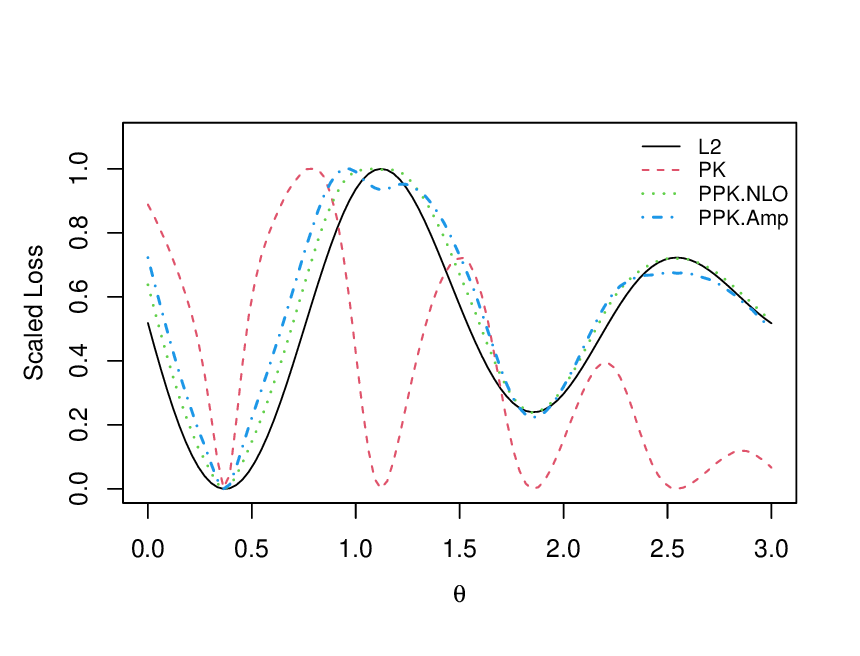}
\end{minipage}%
}%
\centering
 \caption{ (a) the true process $\zeta(x)$ (black real line) v.s. the calibrated computer model $y^s(x,\theta^*)$ (black dashed line); (b)-(d) $L_2$ loss function (black real line) v.s. PK loss function (red dashed line) v.s. PPK. NLO loss function (green dotted line) v.s.  PPK. Amp loss function  (blue dot-dash line).}
   \label{fig:ppk2}
\end{figure}

 Looking at Fig.  \ref{fig:ppk2}-(a), one may see that even though $\theta^*$ is the global minima of  $\|\zeta-y^s(\cdot,\theta)\|^2_{L_2(\Omega)}$, the discrepancy between the true process and the calibrated computer model is still large.  Figs. \ref{fig:ppk2}-(b-d) 
provide a comparison among the different loss functions when the sample sizes are $6, 15$ and $100$. PPK. NLO loss function is the PPK loss function determined by using the $NLO$ index, whereas PPK.Amp loss is that found using the $Amp$ index. Since we have a single parameter, we use the package \emph{rootSolve} \cite{soetaert2009rootsolve} in R to 
obtain the number of local optima of the loss function and the estimation of $\eta_{NLO}$. Let $\Theta_s=\{\theta_1,\ldots,\theta_{N'}\}$ where  $\theta_i=\frac{3i}{N'}$ and $N'=100$.  By approximating $Amp(L)$ using (\ref{disamp}), we obtain $\eta_{Amp}$.  Table \ref{tab:gamma} summarizes results for $\eta_{NLO}$ and $\eta_{Amp}$ at different
sample sizes.

\begin{table}[!htbp]
    \caption{Choice of $\eta$}
    \label{tab:gamma}
    \centering
       \setlength{\tabcolsep}{3pt}
    \renewcommand{\arraystretch}{1.2}
  {  \begin{tabular}{lccccccc}
        \hline
        Sample size&$\eta_{NLO}$&$\eta_{Amp}$&\\
        \hline
        $n=6$ & $0.153$ & $0.0503$  \\
         $n=15$ & $0.705$&  $0.236$\\
       $n=100$& $13.632$ & $6.477$\\
           \hline
    \end{tabular}}
 \end{table}
Figure \ref{fig:ppk2} shows that when the sample size is larger than $6$, the PPK loss function has several extrema. Figure \ref{fig:ppk2}-(b) shows that when the sample size is $6$, one may discriminate the global minimum from the local minimum near $1.855$ by an effective optimization algorithm. However, Figs. \ref{fig:ppk2}-(c-d) show that when the sample size is larger than $15$, it becomes extremely hard to find the global minimum by any optimization algorithm. It can be seen that using the PPK loss function solves this problem. The number of local optima for the PPK.NLO loss functions is $4$, and the 
values of the local minima are different. Although the PPK.Amp loss functions have more than $4$ local optimal points, the global minimizer may be evaluated effectively by some optimization method, e.g. the quasi-Newton optimization methods with multiple initial points.

 As it follows from its definition, and from the fact that $NLO(L)$ counts the 
number of local optima, the BIC-like criterion looks for values of 
$\eta$ that decrease the number of local optimal points of the PPK loss 
function. As a consequence, $\eta_{NLO}$ is larger than $\eta_{Amp}$ as shown in Table  \ref{tab:gamma}, and PPK.NLO loss functions have less local optima than the PPK.Amp one,  as shown in Figure \ref{fig:ppk2}. Moreover,  $\eta_{NLO}$ and $\eta_{Amp}$ are increasing with the sample size. The reason is that it becomes much harder to pick out the global minimum of the PK loss function for increasing $n$.

Let us now compare the performance of the PPK calibration with that of KO's calibration (KO), $L_2$ calibration ($L_2$), least square calibration (LS), scaled Gaussian process model calibration (SGP), and projected kernel calibration (PK). To this aim, we repeat the process of calibration 100 times for each method, and show the box-plots of $\hat\theta$ in Figure \ref{fig:ppk3}. Since the PK calibration is easily trapped in a local optimal solution of the PPK loss function, we narrow the search space of the PK calibration to $[0,0.5]$.  
 \begin{figure}[htbp]
\centering
\subfigure[$n=6$.]{
\begin{minipage}[t]{0.95\linewidth}
\centering
\includegraphics[width=4.9in]{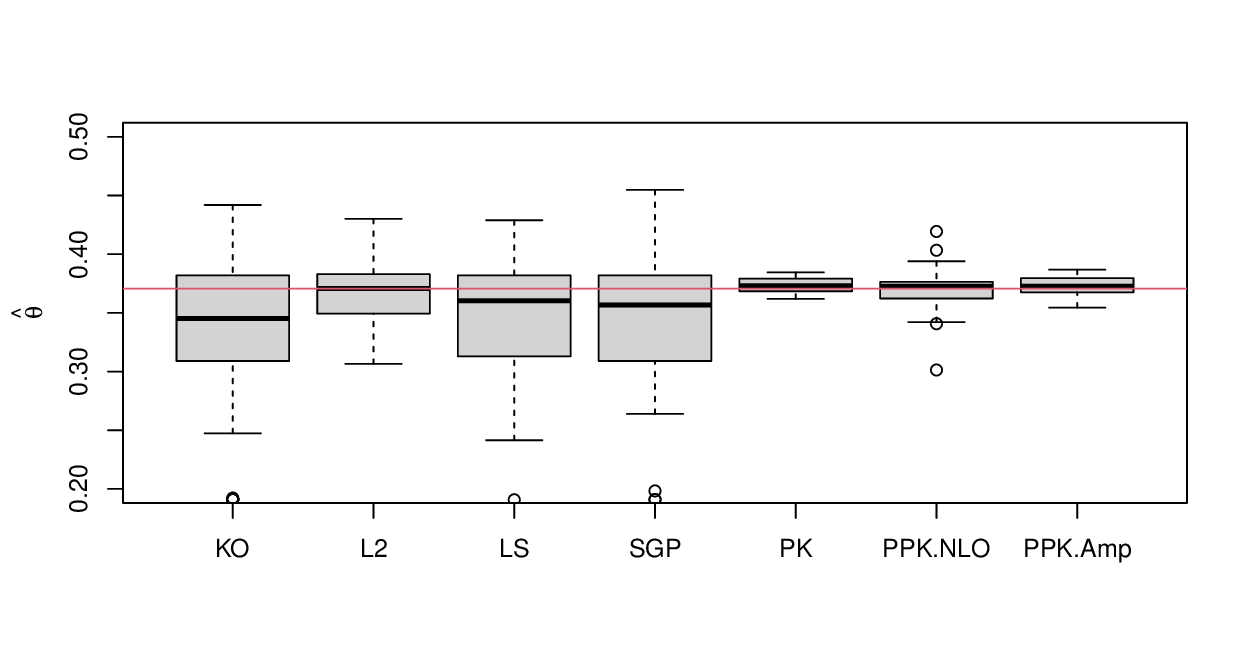}
\end{minipage}%
}%
\quad
\subfigure[$n=15$.]{
\begin{minipage}[t]{0.95\linewidth}
\centering
\includegraphics[width=4.9in]{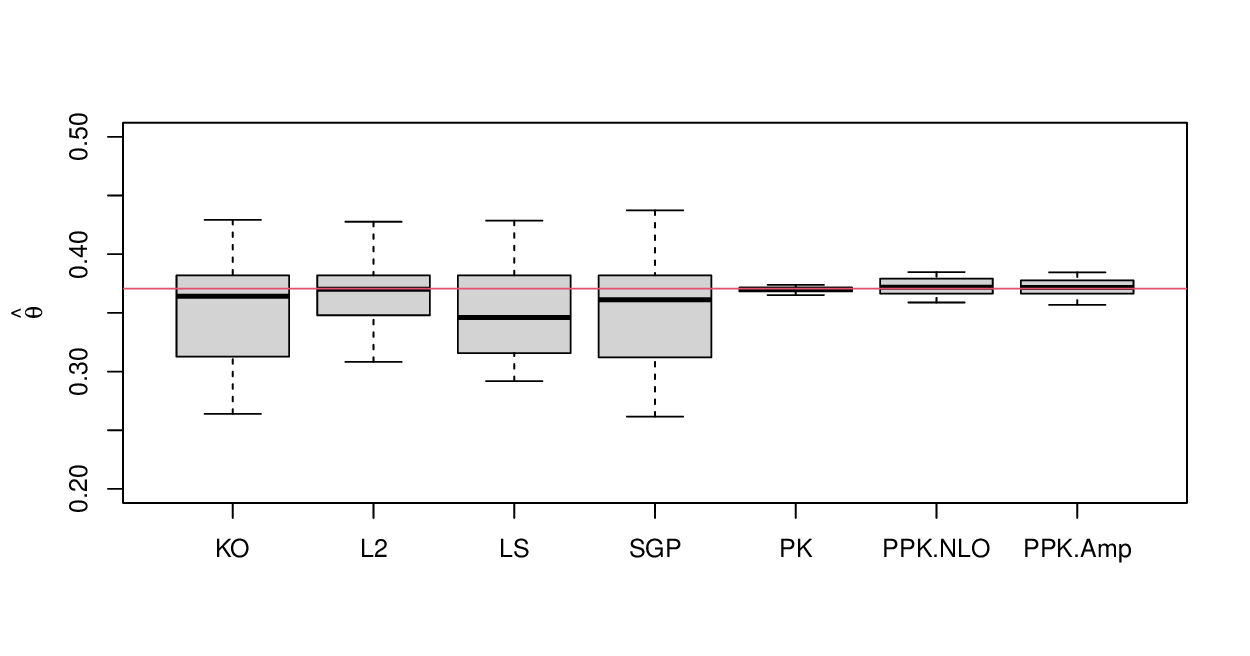}
\end{minipage}%
}%
\quad
\subfigure[$n=100$.]{
\begin{minipage}[t]{0.95\linewidth}
\centering
\includegraphics[width=4.9in]{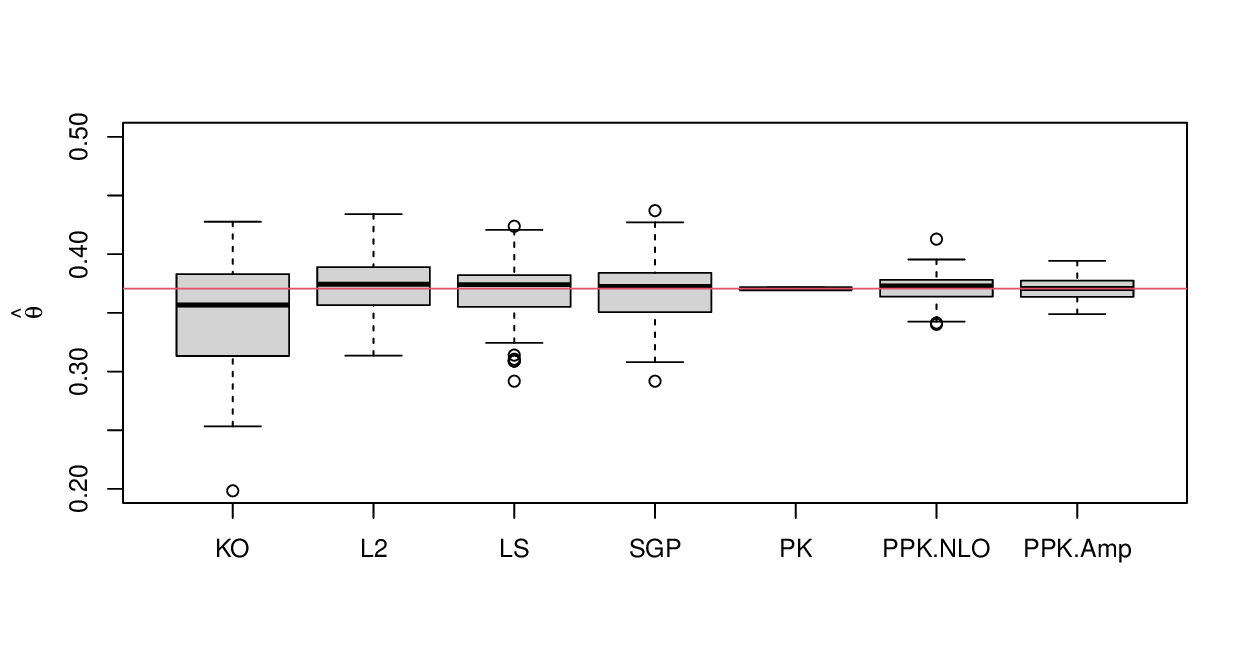}
\end{minipage}%
}%

\centering
 \caption{Estimations of the calibration parameter by different methods.}
   \label{fig:ppk3}
\end{figure}

 From Fig. \ref{fig:ppk3}, we can see  that the variance of $\hat\theta_{PK}$ is the smallest. In fact, as the PK loss function tends to have more local optima, the PK loss function near its global minimum point is ``sharp"  (the derivative near the global minimum point is large). We also see that the bias of $\hat\theta_{PK}$ is close to zero.  These results indicate that if the search region is narrowed, the PK calibration is clearly superior to the other methods. However, when there are many local optima in the search region, the PK calibration loses its advantages.

When the sample size is $6$ and $15$, the slower convergence speed leads to poor performance of the $\hat\theta_{SGP}$.
When the sample size is $100$, we can still see substantial estimation errors in $\hat\theta_{KO}$, the reason is that the discrepancy $\delta$ is large ( see Figure \ref{fig:ppk2}-(a)) and $\hat\theta_{KO}$ is inconsistent. 
 
If we denote the estimate obtained from PPK.NLO (PPK.Amp) as $\hat\theta_{PPK.NLO}$ ( $\hat\theta_{PPK.Amp}$), we have that the bias of $\hat\theta_{PPK}$ is close to zero. 
The variance of $\hat\theta_{PPK}$ is smaller than the variance of $\hat\theta$ given by other methods, except that from the PK calibration method. In addition, because the tuning parameter $\eta_{Amp}<\eta_{NLO}$, the $\hat\theta_{PPK.Amp}$'s are  closer to  $\hat\theta_{PK}$, and the variance of  $\hat\theta_{PPK.Amp}$ is smaller than the variance of  $\hat\theta_{PPK.NLO}$. It implies that our proposed method outperforms the other calibration methods.

  \subsection{Low-accuracy version of the PARK function \cite{xiong2013sequential}}
  \label{ex2}
Assume that $\zeta(\cdot)$ is the PARK function \cite{park1991tuning},
 $$ \zeta{(\bm x)}=\frac{x_1 } {2} \left[\sqrt{1 + (x_2+x_3^2)\frac{x_4}{x_1^2}} - 1\right]+(x_1 + 3x_4) \exp[1 + \sin(x_3)], \bm x\in[0,1]^4.$$
In Ref. \cite{xiong2013sequential} a lower accuracy version of the PARK function is used
for the purpose of multi-fidelity simulation. Assuming that some constants of this  lower fidelity model are to be determined, we use following computer model to examine the performance of the proposed method:
 $$y^s(\bm x,\bm \theta)=(\theta_1+\frac{\sin(x_1)}{10}) \zeta(\bm x) +\theta_2( -2x_1 + x_2^2 +x_3^2)+0.5,$$
where $\theta_1$ and $\theta_2$ are two calibration parameters, with $\bm\theta \in [-5,5]^2$.   

Let $\mathbf X = (\bm x_1,\ldots, \bm x_n)^T$ be the physical design, which is randomly generated by maximin Latin hypercube design method \cite{santner2013design}.  Suppose the observation error $\epsilon_i$'s are mutually independent and distributed as $N(0,0.1^2)$. 
We use a Mat\'ern kernel function (\ref{matern f}) with $\nu=7/2$ as the kernel function $K$. To determine the hyper-parameter $\rho$ in (\ref{matern f}), for fixed $\bm\theta_0$, we build a Gaussian-process model to approximate $y(\bm x_i)-y^s(\bm x_i,\bm\theta_0)$ and estimate $\rho$ by using maximum likelihood. Because the least square estimator $\hat{\bm\theta}_{LS}$ proposed in \cite{wong2014frequentist} is consistent, we set $\bm\theta_0=\hat{\bm\theta}_{LS}$. 

 \begin{figure}[htbp]
  \centering
  \includegraphics[scale=0.65]{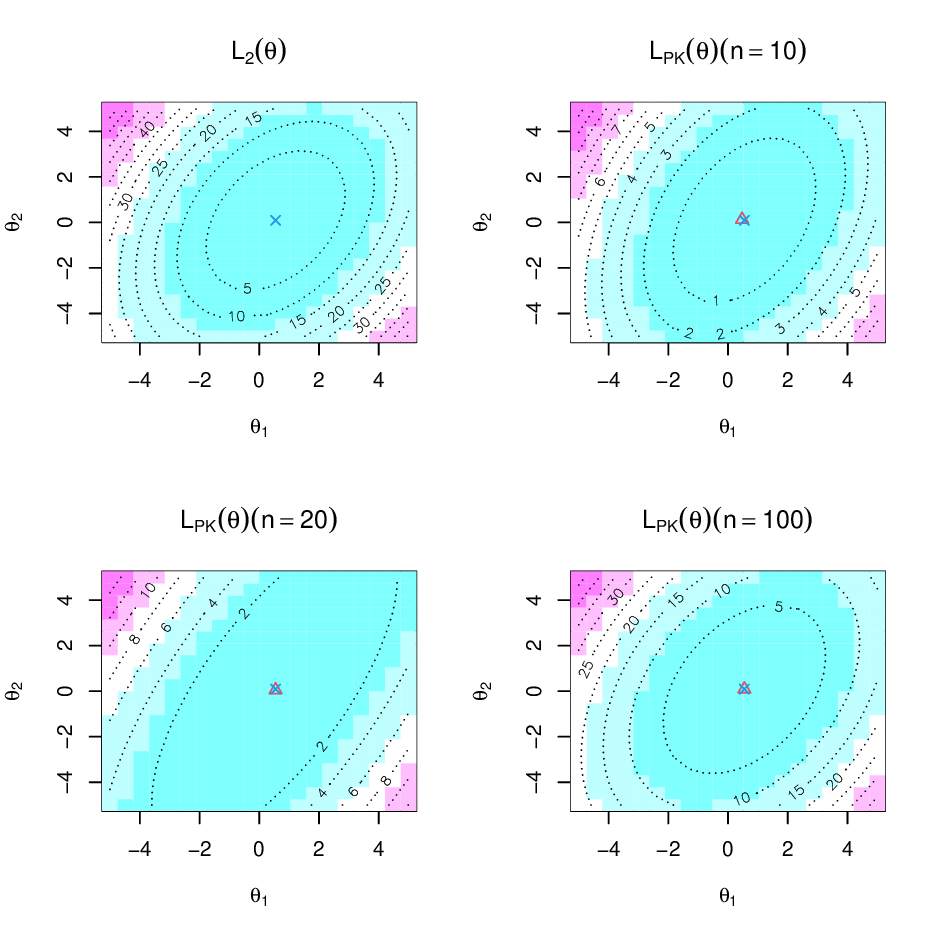}
  \caption{$L_2$ loss function v.s. PK loss function with $n=\{10,20,100\}$ in Example \ref{ex2}.The cross in each subfigure is the location of $\bm\theta^*$, and triangle is the location of $\bm\theta_{PK}$.}
  \label{fig:l2loss.park}
\end{figure}

Contour maps of the $L_2$ and the PK loss functions with $n=\{10, 20,100\}$ are shown in Figure \ref{fig:l2loss.park}. From the top left subfigure, we can see that, $\bm\theta^*$ is the only local optimal point of the $L_2$ loss function. The top right subfigure, and the two lower subfigures, show that the PK loss function has only one local minimum, regardless of the sample size. This indicates that when the $L_2$ loss function has only one local optimal point, the PK loss function is not affected by the multiple local minima problem. Since the $L_2$ and the PPK loss functions are convex, we apply the NEWUOA algorithm \cite{powell2006newuoa} to find $\bm \theta^*$ and $\hat{\bm\theta}_{PK}$.  In Fig. \ref{fig:l2loss.park}, $\bm\theta^*=(0.546,0.0926)$ and the $\hat{\bm\theta}_{PK}$ are denoted by a blue cross and by red triangles, respectively. By
comparing the locations of the red triangles with that of the blue cross, we have that, $\hat{\bm\theta}_{PK}$ is very close to $\bm\theta^*$ especially then $n$ is large.

 To compare the performance of the proposed method with some existing calibration methods, we repeat the simulation procedure $100$ times to assess the average performance of different methods. 
\begin{figure}[htbp]
  \centering
  \includegraphics[scale=0.65]{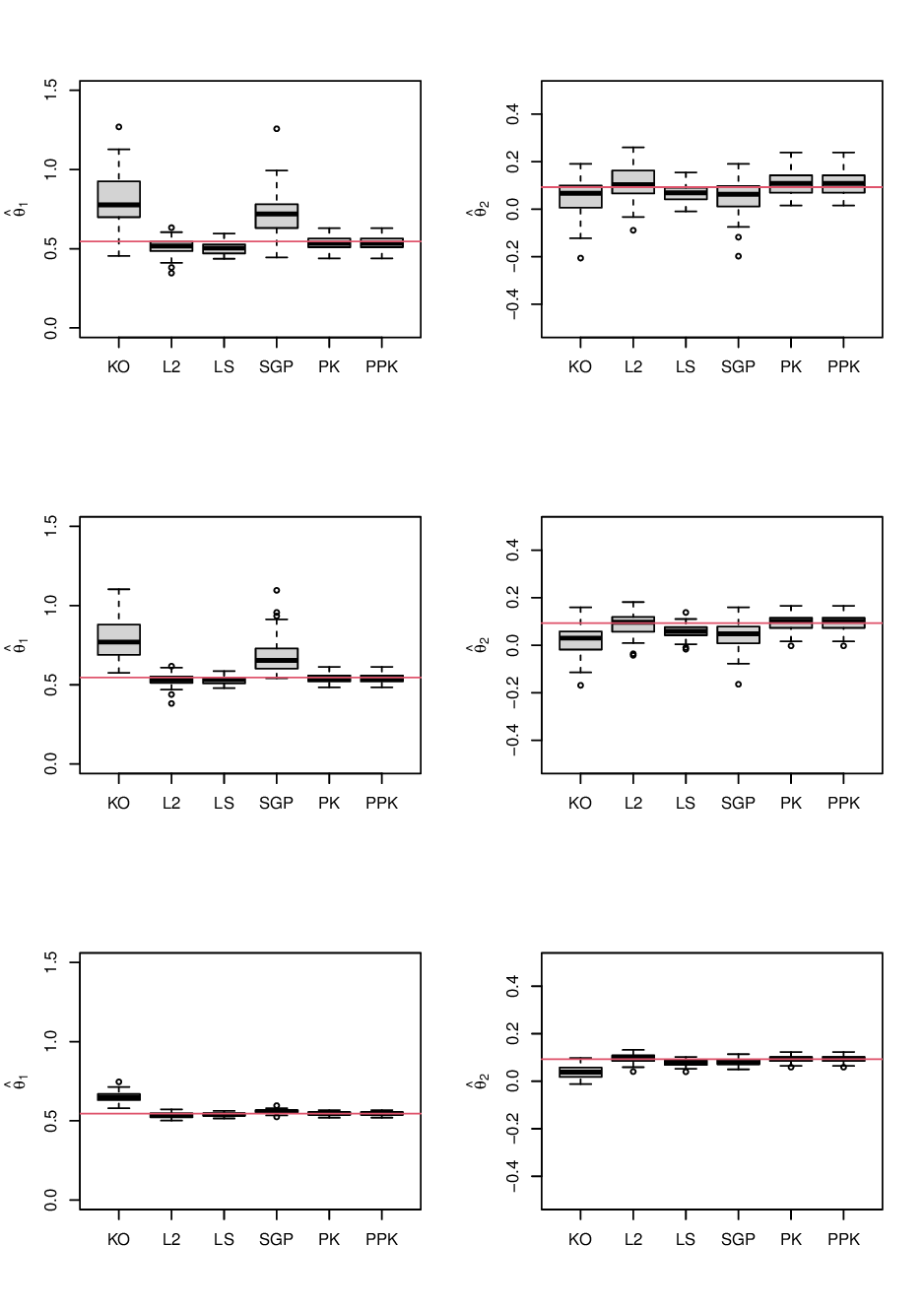}
  \caption{Comparisons of different calibration methods with the sample size $n=10$ (the first row), $n=20$ (the second row) and $n=100$ (the third row).}
  \label{fig:com.park}
\end{figure}
 
 Figure \ref{fig:com.park} illustrates the estimation results of the different calibration methods. Since there is only one local optimal point in the PK loss function, the choice of $\eta$ is zero, therefore $\hat{\bm\theta}_{PPK}=\hat{\bm\theta}_{PK}$. We can see that when the sample size is $10$ and $20$, the bias of $\hat{\bm\theta}_{PK}$ is the smallest, whereas the variance of $\hat{\bm\theta}_{PK}$ is slightly larger than $\hat{\bm\theta}_{LS}$.  When the sample size is $100$, all the methods perform well except the KO's calibration. 

 \subsection{Spot welding example}
Let us now consider the spot welding example studied in \cite{bayarri2012framework} and \cite{xie2020bayesian}. Analogously to \cite{xie2020bayesian}, we consider  two control variables: the load and the current.  Besides the control variables in the physical
experiment, the computer model (a Finite Element model) also involves a calibration parameter (denoted as $u$
in \cite{bayarri2012framework}). Details of the inputs and outputs of the computer experiments are listed in Table  \ref{tab:spotweld}.
  \begin{table}[!htbp]
   \caption{Inputs and output of the computer experiments}
    \label{tab:spotweld}
    \centering
    \footnotesize
    \setlength{\tabcolsep}{4pt}
    \renewcommand{\arraystretch}{1.2}
    \begin{tabular}{lcccc}
        \hline
       Inputs &\\
        \hline
      $C$ ( current )& $[23,30]$ &control  variable \\
      $L$ (load)  & $[3.8,5.5]$ &control variable \\
        $\theta$ (contact resistance) & $[0.8,8]$ & calibration parameter \\
        \hline
         Output& Size of the nugget after 8-cycles\\
          \hline
          \end{tabular}
\end{table}
The physical data are listed in Table 4 of \cite{bayarri2012framework}. There are
21 available runs for the computer code, as presented in Table 3 of \cite{bayarri2012framework}.
With the help of the RobustGaSP package \cite{gu2018robustgasp} in R, a Gaussian process model is built to approximate the computer outputs.  In the process of calibration, the Finite Element model is replaced by the predictive mean of the RobustGaSP emulator. Since there is only one local optimal point in the PK loss function, also here we have $\eta=0$. 
 \begin{figure}[htbp]
  \centering
  \includegraphics[scale=0.65]{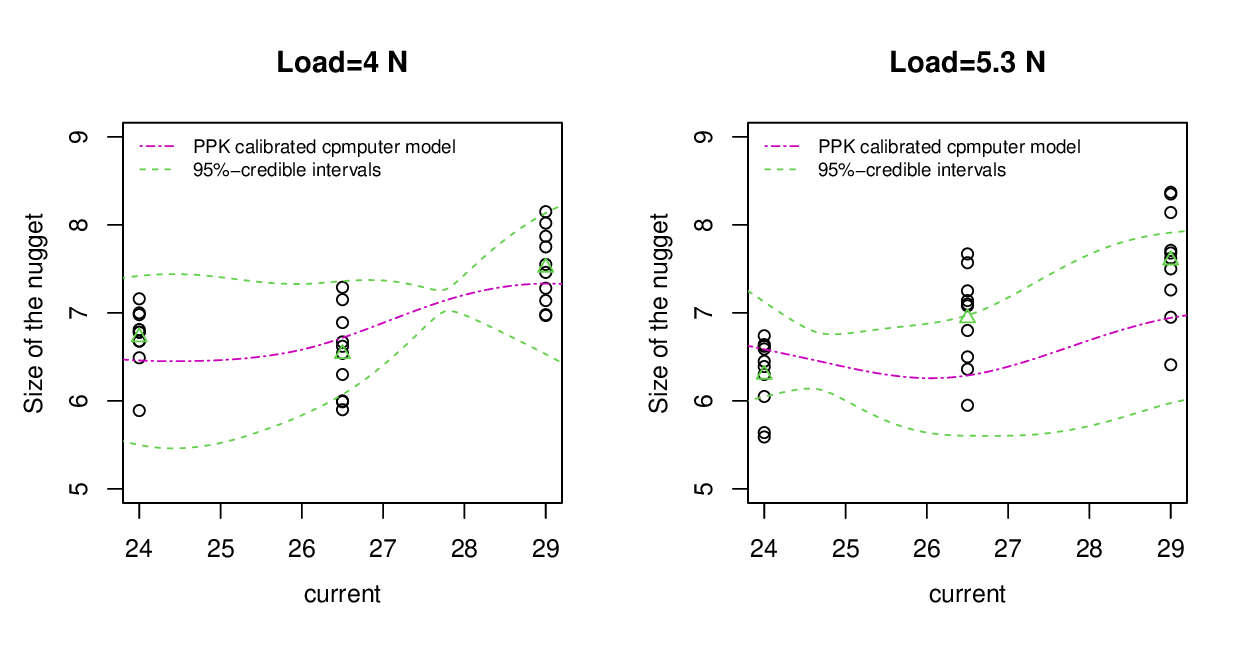}
  \caption{Physical observations (black circles); mean of physical observations for a fixed current (green triangles);  mean of the calibrated computer model by the proposed method (red real line) and the 95\% interval of  the calibrated computer model (green dashed line).}
  \label{fig:sw}
\end{figure}
 
The computer models calibrated by the proposed method, together with their corresponding point-wise 95\%-credible
intervals, are depicted in Figure \ref{fig:sw}. We can see that the proposed method provides a well calibrated computer model.

\section{Discussion}
\label{sec6}
In this work, we have proven that the projected kernel calibration may be easily trapped 
in local minima of the $L_2$ loss between the true process and the computer model (or even in local maxima). A frequentist calibration method has been proposed to overcome this problem. The estimators of the calibration parameters given by the proposed method are consistent and semi-parametrically efficient. Numerical examples have been studied to compare the proposed methods with some existing calibration methods, and results show that our  method outperforms the others.

{The proposed method suggests  a Monte Carlo method   (\ref{approx}) to approximate the $L_2$ inner products.  
However, there is no guarantee that the numerical estimator based on the Monte Carlo approximation is close enough to the theoretical estimator  (\ref{ppkestimator}). This will require further work.

In this work, we assume that all possible functions of interest belong to the reproducing kernel Hilbert space $\mathcal{N}_K(\Omega)$ generated by the kernel function $K$. We also assume that the space $\mathcal{N}_K(\Omega)$ can be embedded into the Sobolev space $H^m(\Omega)$ with $m>d/2$. In other words, we are mainly concerned with the reproducing kernel Hilbert spaces  generated by the smooth kernel functions. Rough kernel functions  such as those generated by the rough fractional maximal and integral operators \cite{gurbuz2017some, gurbuz2020behaviors} and the Schr{\"o}dinger operators \cite{gurbuz2018generalized, gurbuz2020generalized, gurbuz2021note}, etc. are not covered in this paper. The results for calibration with the rough kernel functions  need further investigation. }

\appendix

 \section{Technical  proofs in Section \ref{comparison}}
\label{App:proof-2}
In this section, we prove the propositions and theorems in Section \ref{comparison}.

 \subsection{Proof of Proposition \ref{th:equatheta}}
\begin{proof}
	Following from the generalized representer’s theorem \cite{scholkopf2001generalized}, $\hat\delta^{\theta}_{PK}$ can be represented by 
  \begin{eqnarray}
  \label{krr-repres}
\hat\delta^{\theta}_{PK}(\bm x)=K^T_{\theta}(\bm x,\bm X)(\mathbf K_{\theta}+n\lambda \mathbf{I}_n)^{-1}(\bm Y-\bm Y^s_{\theta}),
\end{eqnarray}
{where $K_{\theta}(\bm x,\bm X)=(K_{\theta}(\bm x,\bm x_1),\ldots,K_{\theta}(\bm x,\bm x_n))^T$.}
{Therefore the norm of $\hat{\delta}_{PK}^\theta$ in $\mathcal{N}_{K_{
	\theta}}(\Omega)$ is given by}
	\begin{eqnarray}\label{native}
	\|\hat{\delta}_{PK}^\theta\|^2_{\mathcal{N}_{K_{
	\theta}}(\Omega)}=(\bm Y-\bm Y_{\theta}^s)^T (\mathbf K_{\theta}+n\lambda \mathbf{I}_n)^{-1} \mathbf K_{\theta}(\mathbf K_{\theta}+n\lambda \mathbf{I}_n)^{-1} (\bm Y-\bm Y_{\theta}^s).
	\end{eqnarray}
	
	Moreover, because the vector $\left(\hat{\delta}_{PK}^\theta(\bm x_1),\ldots,\hat{\delta}_{PK}^\theta(\bm x_n)\right)^T$ can be expressed by
	\begin{eqnarray*}
		\mathbf K_{\theta}(\mathbf K_{\theta}+n\lambda \mathbf{I}_n)^{-1} (\bm Y-\bm Y_{\theta}^s).
	\end{eqnarray*}
	It yields
	\begin{eqnarray}\label{n}
	\frac{1}{n}\sum_{i=1}^n (\delta_i^\theta-\hat{\delta}_{PK}^\theta(\bm x_i))^2=n\lambda^2(\bm Y-\bm Y_{\theta}^s)^T (\mathbf K_{\theta}+n\lambda I_n)^{-2} (\bm Y-\bm Y_{\theta}^s).
	\end{eqnarray}
	
	Combining (\ref{native}) with (\ref{n}), we obtain that
	\begin{eqnarray*}
		\frac{1}{n}\sum_{i=1}^n (\delta_i^\theta-\hat{\delta}_{PK}^\theta(\bm x_i))^2+\lambda\|\hat{\delta}_{PK}^\theta\|^2_{\mathcal{N}_{K_{\theta}}(\Omega)}=\lambda (\bm Y-\bm Y_{\theta}^s)^T(\mathbf K_{\theta}+n\lambda I_n)^{-1} (\bm Y-\bm Y_{\theta}^s),
	\end{eqnarray*}
	which implies the desired result.
\end{proof}

\subsection{Proof of Proposition \ref{th:lossforderiv}}
\begin{proof}
Because  $\mathcal{G}_{\theta}\subset \mathcal{N}_K(\Omega) $ is finite dimensional,  based on the generalized representer’s theorem \cite{scholkopf2001generalized}, $\hat\delta^{\theta}$ can be represented as
\begin{eqnarray}
\label{gen-rep}
\hat\delta^{\theta}(\bm x)=\sum_{i=1}^n \alpha^\theta_i K_{\theta}(\bm x,\bm x_i)+\sum_{j=1}^q \beta^\theta_j \frac{\partial y^s(\bm x,\bm\theta)}{\partial \theta_j},
\end{eqnarray}
with $\bm \alpha^{\theta}=(\alpha_1^\theta,\ldots,\alpha_n^\theta)^T$  and  $\bm \beta^{\theta}=(\beta_1^\theta,\ldots,\beta_q^\theta)^T$ defined as
\begin{eqnarray*}
		\bm \alpha^{\theta}=(\mathbf K_{\theta}+n\lambda \mathbf{I}_n)^{-1}(\bm Y-\bm Y^s_{\theta}),
	\end{eqnarray*}
and 	
\begin{eqnarray*}
		\bm \beta^{\theta}=\left(\mathbf G_{\theta}\mathbf G^T_{\theta}\right)^{-1}\mathbf G_{\theta}(\bm Y-\bm Y^s_{\theta}),
	\end{eqnarray*}
	respectively. Here $\mathbf G_{\theta}$ is a $q\times n$ matrix, with $[\mathbf G_{\theta}]_{j,i}=\frac{\partial y^s(\bm x_i,\bm\theta)}{\partial \theta_j}$, $j=1,\ldots,q; i=1\ldots,n$ and $q\leq n$.

From (\ref{gen-rep}), we can easily have that
$$\mathcal{P}^{\perp}_{\mathcal{G}_{\theta}}\hat\delta^{\theta}=\sum_{i=1}^n \alpha^\theta_i K_{\theta}(\bm x,\bm x_i),$$
which implies the desired results.	
\end{proof}

{  \subsection{Proof of Proposition \ref{condelta}}
 \begin{proof}
{Because $\hat{\delta}^{\theta}$   is a minimizer of 
\begin{eqnarray}
l_{\theta}(\delta_0)=\frac{1}{n}\sum_{i=1}^n\left(\delta_i^{\theta}-\mathcal{P}_{\mathcal{G}_{\theta}}\delta_0(\bm x_i)\right)^2+\frac{1}{n}\sum_{i=1}^n\left(\delta_i^{\theta}-\mathcal{P}^{\perp}_{\mathcal{G}_{\theta}}\delta_0(\bm x_i)\right)^2+\lambda\|\mathcal{P}^{\perp}_{\mathcal{G}_{\theta}}\delta_0\|^2_{\mathcal{N}_{K_{\theta}}(\Omega)},
\end{eqnarray}
where $\delta_0 \in \mathcal{N}_K(\Omega)$,
we can deduce the following basic inequality}
 \begin{eqnarray*}
 \begin{aligned}
&\frac{1}{n}\sum_{i=1}^n\left(\delta_i^{\theta}-\mathcal{P}_{\mathcal{G}_{\theta}}\hat{\delta}^{\theta}(\bm x_i)\right)^2+\frac{1}{n}\sum_{i=1}^n\left(\delta^{\theta}_i-\hat{\delta}^{\theta}_{PK}(\boldsymbol x_i)\right)^2+\lambda\|\hat{\delta}^{\theta}_{PK}\|^2_{\mathcal{N}_{K_{\theta}}(\Omega)},\\
&\leq \frac{1}{n}\sum_{i=1}^n\left(\delta_i^{\theta}-\mathcal{P}_{\mathcal{G}_{\theta}}{\delta}^{\theta}(\bm x_i)\right)^2+ \frac{1}{n}\sum_{i=1}^n\left(\delta^{\theta}_i-\mathcal{P}^{\perp}_{\mathcal{G}_{\theta}}{\delta}^{\theta}(\boldsymbol x_i)\right)^2+\lambda\|\mathcal{P}^{\perp}_{\mathcal{G}_{\theta}}{\delta}^{\theta}\|^2_{\mathcal{N}_{K_{\theta}}(\Omega)},\\
\end{aligned}
\end{eqnarray*}
which holds for all $\bm\theta\in\Theta$.
With some simple calculations, the basic inequality can be expressed as
 \begin{eqnarray}
  \label{basicineq}
 \begin{aligned}
&2\left<\mathcal{P}^{\perp}_{\mathcal{G}_{\theta}}{\delta}^{\theta}-\hat{\delta}^{\theta}_{PK}, \mathcal{P}_{\mathcal{G}_{\theta}}{\delta}^{\theta}\right>_n+
2\left<\mathcal{P}_{\mathcal{G}_{\theta}}({\delta}^{\theta}-\hat{\delta}^{\theta}), \mathcal{P}^{\perp}_{\mathcal{G}_{\theta}}{\hat \delta}^{\theta}\right>_n+\\&\left\|{\delta}^{\theta}-\hat{\delta}^{\theta}\right\|_n^2+\lambda\|\hat{\delta}^{\theta}_{PK}\|^2_{\mathcal{N}_{K_{\theta}}(\Omega)}\\
&\leq \lambda\|\mathcal{P}^{\perp}_{\mathcal{G}_{\theta}}{\delta}^{\theta}\|^2_{\mathcal{N}_{K_{\theta}}(\Omega)}+2\left|\left<{\delta}^{\theta}-\hat{\delta}^{\theta}, \epsilon\right>_n\right|.\\
\end{aligned}
\end{eqnarray}

{Next we bound the first two terms on the left side of  (\ref{basicineq}) and the two terms   on the right side of  (\ref{basicineq}),  respectively. }

\begin{itemize}

\item For the first and the second terms on the left side of the basic inequality  (\ref{basicineq}), because $x_i$’s follow the uniform distribution over $\Omega$, there is an asymptotic equivalence relation between the $L_2$ and the empirical norm \cite{tuo2019adjustments}:
  \begin{eqnarray} 
  \label{semi2}
\lim_{n\rightarrow \infty}\sup P\left\{\sup_{\|g\|_{\mathcal{N}_{K}(\Omega)}=O_p(1), \|g\|_{L_2(\Omega)}>\tau n^{-\frac{m}{2m+d}}/\eta}\left| \frac{\|g\|_n}{\|g\|_{L_2(\Omega)}}-1\right|\geq \eta\right\}=0.
 \end{eqnarray}
As a result, we have that
 \begin{eqnarray}
 \label{firstterm} 
\left<\mathcal{P}^{\perp}_{\mathcal{G}_{\theta}}{\delta}^{\theta}-\hat{\delta}^{\theta}_{PK}, \mathcal{P}_{\mathcal{G}_{\theta}}{\delta}^{\theta}\right>_n+\left<\mathcal{P}_{\mathcal{G}_{\theta}}({\delta}^{\theta}-\hat{\delta}^{\theta}), \mathcal{P}^{\perp}_{\mathcal{G}_{\theta}}{\hat \delta}^{\theta}\right>_n=o_p(n^{-1/2}).
\end{eqnarray}

\item  For the first term on the right side of the basic inequality  (\ref{basicineq}), following from the Theorem 3.3 in \cite{tuo2019adjustments} and together with the condition A3, we have that, there is a constant $a_1>0$ such that 
{\begin{eqnarray}
\label{sec-right}
\|\mathcal{P}^{\perp}_{\mathcal{G}_{\theta}}{\delta}^{\theta}\|^2_{\mathcal{N}_{K_{\theta}}(\Omega)}\leq b^2_1\sup_{\bm\theta\in\Theta} \|{\delta}^{\theta}\|^2_{\mathcal{N}_{K}(\Omega)}\leq a_1,
\end{eqnarray}
where $$b_1=1+\sup_{\bm\theta\in \Theta}\sup_{g\in \mathcal{G}_{\theta}, \|g\|_{L_2(\Omega)}=1 }\|g\|_{\mathcal{N}_{K}(\Omega)}\|<K,g>_{L_2(\Omega)}\|_{\mathcal{N}_{K}(\Omega)}.$$}

\item  For the second term on the right side of the basic inequality  (\ref{basicineq}),
following from the Theorem 5.11 in \cite{geer2000empirical}, we obtain the modulus of continuity of the empirical process $v(g')=<\epsilon,g'-g>_n$ as
\begin{eqnarray*}
 \begin{aligned}
 \sup_{g\in \mathcal{N}_{K}(\Omega)}\frac{|<\epsilon,g'-g>_n|}{\|g-g'\|_n^{1-\frac{d}{2m}}\|g'\|^{d/2m}_{\mathcal{N}_{K}(\Omega)}}=O_p(n^{-1/2}).
\end{aligned}
\end{eqnarray*}
{That is, there is a constant $a_2>0$ such that
\begin{eqnarray}
\label{ep}
 \begin{aligned}
&|<{\delta}^{\theta}-\hat{\delta}^{\theta}, \epsilon>_n|\\
&\leq \frac{a_2}{2} n^{-1/2}\left\|{\delta}^{\theta}-\hat{\delta}^{\theta}\right\|_n^{1-\frac{d}{2m}}\left\|\delta^{\theta}\right\|^{d/2m}_{\mathcal{N}_{K}(\Omega)}.
\end{aligned}
\end{eqnarray}
By plugging  (\ref{sec-right}) into (\ref{ep}), we have that
\begin{eqnarray}
\label{ep-2}
 \begin{aligned}
&|<{\delta}^{\theta}-\hat{\delta}^{\theta}, \epsilon>_n|\\
&\leq \frac{a_2}{2} n^{-1/2}\left\|{\delta}^{\theta}-\hat{\delta}^{\theta}\right\|_n^{1-\frac{d}{2m}}\left(\frac{\sqrt{a_1}}{b_1}\right)^{d/2m}.
\end{aligned}
\end{eqnarray}}
\end{itemize}

{Let $a_3$ be the positive constant ${a_2}\left(\frac{\sqrt{a_1}}{b_1}\right)^{d/2m}$.
By combining  (\ref{firstterm}), (\ref{sec-right}) and (\ref{ep-2}), we have that  the basic inequality  (\ref{basicineq}) can be represented by
 \begin{eqnarray}
 \label{PQ}
 \begin{aligned}
\left\|{\delta}^{\theta}-\hat{\delta}^{\theta}\right\|_n^2+\lambda\|\hat{\delta}^{\theta}_{PK}\|^2_{\mathcal{N}_{K_{\theta}}(\Omega)}\leq a_1 \lambda+{a_3} n^{-1/2}\left\|{\delta}^{\theta}-\hat{\delta}^{\theta}\right\|_n^{1-\frac{d}{2m}}.
\end{aligned}
\end{eqnarray}
 Next we consider  two different cases separately. 
 
 Case I. Suppose $a_1 \lambda \geq {a_3} n^{-1/2}\left\|{\delta}^{\theta}-\hat{\delta}^{\theta}\right\|_n^{1-\frac{d}{2m}}$. 
 Then we obtain from    (\ref{PQ}) that
 \begin{equation*}
	\left\|{\delta}^{\theta}-\hat{\delta}^{\theta}\right\|_n^2+\lambda\|\hat{\delta}^{\theta}_{PK}\|^2_{\mathcal{N}_{K_{\theta}}(\Omega)}\leq 2 a_1 \lambda.
	\end{equation*}
		It implies that  the following inequalities
	\begin{eqnarray}
	\label{le2part1}
		\begin{aligned}
			\left\|{\delta}^{\theta}-\hat{\delta}^{\theta}\right\|_n^2&\leq 2a_1 \lambda
		\end{aligned}
	\end{eqnarray}
{and	}
		\begin{eqnarray}
	\label{le2part1-2}
		\begin{aligned}
			\|\hat{\delta}^{\theta}_{PK}\|^2_{\mathcal{N}_{K_{\theta}}(\Omega)}&\leq 2 a_1
		\end{aligned}
	\end{eqnarray}
 hold simultaneously.
	
	
By combining (\ref{le2part1}) and (\ref{le2part1-2}), we have that if $\lambda\sim n^{-\frac{2m}{2m+d}}$, there are constants $a_4>0$ and $a_5>0$ such that the following inequalities hold  simultaneously
	\begin{eqnarray}\label{le3}
		\begin{aligned}
	&\left\|{\delta}^{\theta}-\hat{\delta}^{\theta}\right\|_n\leq   a_4 n^{-\frac{m}{2m+d}},\\
	&\|\hat{\delta}^{\theta}_{PK}\|_{\mathcal{N}_{K_{\theta}}(\Omega)}\leq  a_5.
	\end{aligned}
	\end{eqnarray}

 Case II. Suppose $a_1 \lambda < {a_3} n^{-1/2}\left\|{\delta}^{\theta}-\hat{\delta}^{\theta}\right\|_n^{1-\frac{d}{2m}}$.  
 Then we obtain from    (\ref{PQ}) that
 \begin{equation*}
	\left\|{\delta}^{\theta}-\hat{\delta}^{\theta}\right\|_n^2+\lambda\|\hat{\delta}^{\theta}_{PK}\|^2_{\mathcal{N}_{K_{\theta}}(\Omega)}\leq 2{a_3} n^{-1/2}\left\|{\delta}^{\theta}-\hat{\delta}^{\theta}\right\|_n^{1-\frac{d}{2m}}.
	\end{equation*}
 	It implies that  the following inequalities
	\begin{eqnarray}\label{PQineq2}
	\begin{aligned}
	\left\|{\delta}^{\theta}-\hat{\delta}^{\theta}\right\|_n^2&\leq 2 {a_3} n^{-1/2}\left\|{\delta}^{\theta}-\hat{\delta}^{\theta}\right\|_n^{1-\frac{d}{2m}}
		\end{aligned}
	\end{eqnarray}
	{ and}
		\begin{eqnarray}\label{PQineq2-2}
	\begin{aligned}
	\lambda\|\hat{\delta}^{\theta}_{PK}\|^2_{\mathcal{N}_{K_{\theta}}(\Omega)}&\leq 2 {a_3} n^{-1/2}\left\|{\delta}^{\theta}-\hat{\delta}^{\theta}\right\|_n^{1-\frac{d}{2m}}
	\end{aligned}
	\end{eqnarray}
hold  simultaneously.
	
	From (\ref{PQineq2}) and (\ref{PQineq2-2}), we have that  if $\lambda\sim n^{-\frac{2m}{2m+d}}$, there are constants $a_6>0$ and $a_7>0$ such that the following inequalities hold  simultaneously
	\begin{eqnarray}\label{le4}
		\begin{aligned}
	&\left\|{\delta}^{\theta}-\hat{\delta}^{\theta}\right\|_n\leq   a_6  n^{-\frac{m}{2m+d}},\\
	&\|\hat{\delta}^{\theta}_{PK}\|_{\mathcal{N}_{K_{\theta}}(\Omega)}\leq  a_7.
	\end{aligned}
	\end{eqnarray}
The desired results then follow by combining
 (\ref{le3}) and (\ref{le4}).}

  \end{proof}

\subsection{Proof of Theorem  \ref{th:stationary points}}
\label{1st der}
\begin{proof} 
The first derivative of the PK loss function on $\theta_j, j=1,\ldots,q$ can be evaluated by
   \begin{eqnarray} 
    \begin{aligned}
& \frac{\partial}{\partial \theta_j} \left\{\frac{1}{n}\sum_{i=1}^n\left(\delta_i^{\theta}-\hat\delta_{PK}^{\theta}(\bm x_i)\right)^2+\lambda\|\hat\delta_{PK}^{\theta}\|^2_{\mathcal{N}_{K_{\theta}}(\Omega)}\right\}\\
&=\frac{2}{n}\sum_{i=1}^n\left(\delta_i^{\theta}-\hat\delta_{PK}^{\theta}(\bm x_i)\right)\frac{\partial\left(\delta_i^{\theta}-\hat\delta_{PK}^{\theta}(\bm x_i)\right)}{\partial\theta_j}+\lambda\frac{\partial \|\hat\delta_{PK}^{\theta}\|^2_{\mathcal{N}_{K_{\theta}}(\Omega)}}{\partial\theta_j}\\
    \end{aligned}
    \label{dpkc}
    \end{eqnarray}
   
    Next we consider the partial derivatives of  $\delta_i^{\theta}$, $ \hat \delta_{PK}^{\theta}$ and $\|\hat\delta_{PK}^{\theta}\|^2_{\mathcal{N}_{K_{\theta}}(\Omega)}$ on $\theta_j$ separately. 
   { 
    \begin{itemize}
\item  It can be easily seen that   
     \begin{eqnarray} 
  \frac{\partial \delta_i^{\theta}}{\partial \theta_j}=
 -\frac{\partial y^s(\bm x_i,\bm\theta)}{\partial \theta_j}.
 \label{ddelta}
  \end{eqnarray}
   
 \item From Proposition \ref{th:lossforderiv}, we have $\hat \delta_{PK}^{\theta}=\hat\delta^{\theta}-\mathcal {P}_{\mathcal{G}_{\theta}}\hat\delta^{\theta}$. 
Because
 $\mathcal {P}_{\mathcal{G}_{\theta}}\hat\delta^{\theta}=\bm b^{T}_{\theta}\mathbf E^{-1}_{\theta}\bm g_{\theta}$, where $\bm g^T_{\theta}(\cdot)=\frac{\partial{y^s(\cdot,\bm\theta)}}{\partial{\bm\theta}}=\left(\frac{\partial{y^s(\cdot,\bm\theta)}}{\partial{\theta_1}},\ldots, \frac{\partial{y^s(\cdot,\bm\theta)}}{\partial{\theta_q}}\right)$ and $\bm b^{T}_{\theta}=\int_{\Omega}\hat\delta^{\theta}(\bm x)\frac{\partial{y^s(\bm x,\bm\theta)}}{\partial{\bm\theta}}d\bm x$, by some elementary calculations, we have that
     
    
%

 \begin{eqnarray} 
  \begin{aligned}
  \frac{\partial \hat \delta_{PK}^{\theta}}{\partial \theta_j}&
 &= -\frac{\partial y^s(\bm x_i,\bm\theta)}{\partial \theta_j}-\frac{\partial {\bm b^T_{\theta}}}{\partial \theta_j}\mathbf E^{-1}_{\theta}\bm g_{\theta}-\bm b^T_{\theta}\frac{\partial {\mathbf E^{-1}_{\theta}}}{\partial \theta_j} \bm g_{\theta}-\bm b^T_{\theta}\mathbf E^{-1}_{\theta}\frac{\partial \bm g_{\theta}}{\partial \theta_j}.\\
\end{aligned}
\label{ddeltahat}
\end{eqnarray}

\item It following from Lemma 6.6 of \cite{tuo2019adjustments} that 
 \begin{eqnarray} 
\frac{ \partial\|\hat\delta_{PK}^{\theta}\|^2_{\mathcal{N}_{K_{\theta}}(\Omega)}}{\partial \theta_j}=O_p(1).
\label{drphsn}
 \end{eqnarray}
\end{itemize} }
 
 Plugging (\ref{ddelta}) -(\ref{drphsn}) into  (\ref{dpkc}), and also since $\lambda\sim n^{-\frac{2m}{2m+d}}$ holds, we have that the partial derivative of the PK loss function becomes

 \begin{eqnarray} 
  \begin{aligned}
&\frac{2}{n}\sum_{i=1}^n\left(\delta_i^{\theta}-\hat\delta_{PK}^{\theta}(\bm x_i)\right)\frac{\partial\left(\delta_i^{\theta}-\hat\delta_{PK}^{\theta}(\bm x_i)\right)}{\partial\theta_j}+o_p(n^{-1/2})\\
&=\frac{\partial {\bm b^T_{\theta}}}{\partial \theta_j}\mathbf E^{-1}_{\theta}\frac{2}{n}\sum_{i=1}^n\delta_i^{\theta}\bm g_{\theta}(\bm x_i) +\bm b^T_{\theta}\frac{\partial {\mathbf E^{-1}_{\theta}}}{\partial \theta_j} \frac{2}{n}\sum_{i=1}^n \delta_i^{\theta}\bm g_{\theta}(\bm x_i)+\bm b^T_{\theta}\mathbf E^{-1}_{\theta}\frac{2}{n}\sum_{i=1}^n \delta_i^{\theta}\frac{\partial \bm g_{\theta}(\bm x_i)}{\partial \theta_j}\\
&-\bm b^T_{\theta}\mathbf E^{-1}_{\theta}\frac{2}{n}\sum_{i=1}^n\hat\delta_{PK}^{\theta}(\bm x_i)\frac{\partial \bm g_{\theta}(\bm x_i)}{\partial \theta_j}+o_p(n^{-1/2}).
   \end{aligned}
   \label{eqdpkc}
  \end{eqnarray}

Then we work on $\frac{1}{n}\sum_{i=1}^n \delta_i^{\theta}\bm g_{\theta}(\bm x_i)$, $\frac{1}{n}\sum_{i=1}^n \delta_i^{\theta}\frac{\partial \bm g_{\theta}(\bm x_i)}{\partial \theta_j}$ and $\frac{2}{n}\sum_{i=1}^n\hat\delta_{PK}^{\theta}(\bm x_i)\frac{\partial \bm g_{\theta}(\bm x_i)}{\partial \theta_j} $ separately. 

{\begin{itemize}
\item By the definition of $\delta_i^{\theta}$, it is easily obtained that 
\begin{eqnarray} 
  \begin{aligned}
\frac{1}{n}\sum_{i=1}^n \delta_i^{\theta}\bm g_{\theta}(\bm x_i)&=<\delta^{\theta},\bm g_{\theta}>_n+<\epsilon,\bm g_{\theta}>_n.
     \end{aligned}
  \end{eqnarray}
Because $\mathcal{N}_{K_{\theta}}(\Omega)$ can be continuously embedded into the Sobolev space $H^m(\Omega)$, it follows from Theorem 5.11 of \cite{geer2000empirical} that
$\frac{1}{\sqrt{n}}\sum_{i=1}^n\epsilon_i\bm g_{\theta}(\bm x_i)=O_p(1).$
 By combining with (\ref{semi2}), we have
  \begin{eqnarray} 
  \frac{1}{n}\sum_{i=1}^n \delta_i^{\theta}\bm g_{\theta}(\bm x_i)=<\delta^{\theta},\bm g_{\theta}>_{L_2(\Omega)}+O_p(n^{-1/2}).
  \label{intb}
    \end{eqnarray}

\item Similarly,  
      \begin{eqnarray} 
      \begin{aligned}
\frac{1}{n}\sum_{i=1}^n \delta_i^{\theta}\frac{\partial \bm g_{\theta}(\bm x_i)}{\partial \theta_j} &=<\delta^{\theta},\frac{\partial \bm g_{\theta}(\bm x_i)}{\partial \theta_j} >_n+<\epsilon,\frac{\partial \bm g_{\theta}(\bm x_i)}{\partial \theta_j} >_n,\\
&=\frac{\partial <\delta^{\theta},\bm g_{\theta}>_{L_2(\Omega)}}{\partial \theta_j}+< \frac{\partial y^s(\bm x,\bm\theta)}{\partial \theta_j},\bm g_{\theta}>_{{L_2(\Omega)}}+O_p(n^{-\frac{1}{2}}).
\end{aligned}
 \label{intb2}
    \end{eqnarray}

\item Because  $\int \hat\delta_{PK}^{\theta}\bm g_{\theta}d\bm x=0$, taking the partial derivative of $\int \hat\delta_{PK}^{\theta}\bm g_{\theta}d\bm x$ on $\theta_j$, we have
\begin{eqnarray*} 
  \begin{aligned}
\int  \frac{\partial  \hat \delta_{PK}^{\theta}}{\partial \theta_j}  \bm g_{\theta}d\bm x+\int  \hat \delta_{PK}^{\theta}\frac{\partial\bm g_{\theta} }{\partial \theta_j}d\bm x=0.
     \end{aligned}
  \end{eqnarray*}
Moverover, because $\int \bm g_{\theta}\bm g^T_{\theta}d\bm x=\mathbf E_{\theta}$, and $\frac{\partial {\mathbf E^{-1}_{\theta}}}{\partial \theta_j}=-\mathbf E^{-1}_{\theta}\frac{\partial {\mathbf E_{\theta}}}{\partial \theta_j}\mathbf E^{-1}_{\theta}$, we have
 \begin{eqnarray} 
  \begin{aligned}
\int  \hat \delta_{PK}^{\theta}\frac{\partial\bm g_{\theta} }{\partial \theta_j}d\bm x&=-\int \bm g_{\theta}  \frac{\partial \hat \delta_{PK}^{\theta}}{\partial \theta_j}d\bm x,\\
&= \int  \bm g_{\theta}\frac{\partial y^s(\bm x_i,\bm\theta)}{\partial \theta_j}d \bm x+\frac{\partial {\bm b_{\theta}}}{\partial \theta_j}+\frac{1}{2}\mathbf E_{\theta}\frac{\partial {\mathbf E^{-1}_{\theta}}}{\partial \theta_j} \bm b_{\theta}.
     \end{aligned}
     \label{inteq}
  \end{eqnarray}
  \end{itemize}}

{Recall that $\bm a^{T}_{\theta}=<\delta^{\theta}(\cdot),\frac{\partial y^s(\cdot,\bm\theta)}{\partial\bm\theta}>_{L_2(\Omega)}$.} By combining (\ref{intb}), (\ref{intb2}) and (\ref{inteq}), we have that (\ref{eqdpkc}) can be represented as 
     \begin{eqnarray}
     \label{dlpk}
     \begin{aligned}
2 \frac{\partial {\bm b^T_{\theta}}}{\partial \theta_j}\mathbf E^{-1}_{\theta}(\bm a_{\theta}-\bm b_{\theta})+
2 \frac{\partial {\bm a^T_{\theta}}}{\partial \theta_j}\mathbf E^{-1}_{\theta}\bm b_{\theta}+\bm a^T_{\theta}\frac{\partial {\mathbf E^{-1}_{\theta}}}{\partial \theta_j} (2\bm a_{\theta}-\bm b_{\theta})+O_p(n^{-\frac{1}{2}}).
     \end{aligned}
     \end{eqnarray}
     
    Let $\bm a_{\theta^s}=\bm 0$,  to check whether $\bm\theta^s$ is a local maximum or a local minimum of the PK loss function, the Hessian matrix of $L_{PK}(\bm\theta)$ at $\bm\theta^s$, denotes as $\mathbf H_{PK}(\bm\theta^s)$, can be evaluated from (\ref{dlpk}). Following from the Proposition  \ref{condelta}, together with the Cauchy-Schwarz inequality, we have that $\bm b_{\theta}=\bm a_{\theta}+O_p( n^{-\frac{m}{2m+d}})$, and thus
 \begin{eqnarray} 
    \begin{aligned}
\mathbf H_{PK}(\bm\theta^s)=2\frac{\partial {\boldsymbol a^T_{\boldsymbol\theta^s}}}{\partial{\bm \theta}}\mathbf E^{-1}_{\boldsymbol\theta^s}\frac{\partial {\boldsymbol a^T_{\boldsymbol\theta^s}}}{\partial{\bm \theta}}+O_p(n^{-\frac{m}{2m+d}}).
    \end{aligned}
    \end{eqnarray}

     That is,  the desired results are obtained.

 \end{proof}

\section{Technical proofs in Section \ref{sec4}}
In this section, we prove the theorems in Section \ref{sec4}.
\subsection{Proof of Theorem \ref{th:stationary points ppk} }
\begin{proof}
From (\ref{ppkloss}),  the stationary points of PPK loss function satisfy that
   \begin{eqnarray} 
    \begin{aligned}
0=& \frac{\partial}{\partial \theta_j} \left\{L_{PK}(\bm\theta)+\eta\|\hat\delta_{PK}^{\theta}\|^2_{L_2(\Omega)}\right\}.
    \end{aligned}
    \label{dppkc}
    \end{eqnarray}
   {The first derivative of the PK loss function on $\theta_j$ can be found in  (\ref{dlpk}). Next, we focus on  the first derivative of  $\|\hat\delta_{PK}^{\theta}\|^2_{L_2(\Omega)}$ on $\theta_j, j=1,\ldots,q$.}
 
  The partial derivative of $\|\hat\delta_{PK}^{\theta}\|^2_{L_2(\Omega)}$ on $\theta_j$ is
  \begin{eqnarray} 
  \label{b2}
\frac{ \partial\|\hat\delta_{PK}^{\theta}\|^2_{L_2(\Omega)}}{\partial \theta_j}=\frac{ \partial \int_{\Omega} (\hat\delta_{PK}^{\theta} )^2d\bm x}{\partial \theta_j}=2\int_{\Omega} \hat\delta_{PK}^{\theta} \frac{ \partial  \hat\delta_{PK}^{\theta} }{\partial \theta_j}d\bm x.
 \end{eqnarray}
 where $\frac{ \partial  \hat\delta_{PK}^{\theta} }{\partial \theta_j}$ is shown in (\ref{ddeltahat}). Because $\int \hat\delta_{PK}^{\theta}\bm g_{\theta}d\bm x=\bm 0$, we have
 \begin{eqnarray} 
  \label{b3}
 \begin{aligned}
&2\int_{\Omega} \hat\delta_{PK}^{\theta} \frac{ \partial  \hat\delta_{PK}^{\theta} }{\partial \theta_j}d\bm x\\=&2\int_{\Omega} \hat\delta_{PK}^{\theta}\times\left( -\frac{\partial y^s(\bm x,\bm\theta)}{\partial \theta_j}-\frac{\partial {\bm b^T_{\theta}}}{\partial \theta_j}\mathbf E^{-1}_{\theta}\bm g_{\theta}-\bm b^T_{\theta}\frac{\partial {\mathbf E^{-1}_{\theta}}}{\partial \theta_j} \bm g_{\theta}-\bm b^T_{\theta}\mathbf E^{-1}_{\theta}\frac{\partial \bm g_{\theta}}{\partial \theta_j}\right)d\bm x,\\
=&-2\bm b^T_{\theta}\mathbf E^{-1}_{\theta}\int_{\Omega}\hat\delta_{PK}^{\theta}\frac{\partial \bm g_{\theta}}{\partial \theta_j}d\bm x.
\end{aligned}
 \end{eqnarray}

The integration by parts formula suggests that $\frac{\partial \bm b_{\theta}}{\partial\theta_j}=\int_{\Omega}\hat\delta^{\theta}\frac{\partial \bm g_{\theta}}{\partial \theta_j}d\bm x-\int_{\Omega}\frac{\partial y^s(\bm x,\bm\theta)}{\partial \theta_j}  \bm g_{\theta}d\bm x$, and $ \frac{\partial {\mathbf E_{\theta}}}{\partial \theta_j}= 2\int \frac{\partial \bm g_{\theta}}{\partial \theta_j}\bm g^T_{\theta}d\bm x$. The derivative of inverse matrix shows that $\frac{\partial {\mathbf E^{-1}_{\theta}}}{\partial \theta_j}=-\mathbf E^{-1}_{\theta}\frac{\partial {\mathbf E_{\theta}}}{\partial \theta_j}\mathbf E^{-1}_{\theta}$. Thus we have
 \begin{eqnarray} 
 \label{b4}
  \begin{aligned}
2\int  \hat \delta_{PK}^{\theta}\frac{\partial\bm g_{\theta} }{\partial \theta_j}d\bm x&=-2\bm b^T_{\theta}\mathbf E^{-1}_{\theta}\frac{\partial \bm b_{\theta}}{\partial\theta_j}-2\bm b^T_{\theta}\mathbf E^{-1}_{\theta}\int_{\Omega}\frac{\partial y^s(\bm x,\bm\theta)}{\partial \theta_j}  \bm g_{\theta}d\bm x-\bm b^T_{\theta}\frac{\partial {\mathbf E^{-1}_{\theta}}}{\partial \theta_j}\bm b_{\theta}.\\
     \end{aligned}
  \end{eqnarray}

  Combining  (\ref{dlpk}) with (\ref{b2})-(\ref{b4}), we have that  (\ref{dppkc}) can be represented as 
     \begin{eqnarray}
     \begin{aligned}
&(1-\eta) \left[2 \frac{\partial {\bm a^T_{\theta}}}{\partial \theta_j}\mathbf E^{-1}_{\theta}\bm a_{\theta}+\bm a^T_{\theta}\frac{\partial {\mathbf E^{-1}_{\theta}}}{\partial \theta_j} \bm a_{\theta}\right]-2\eta\bm a^T_{\theta}\mathbf E^{-1}_{\theta}\int_{\Omega}\frac{\partial y^s(\bm x,\bm\theta)}{\partial \theta_j}  \bm g_{\theta}d\bm x+O_p(n^{-\frac{m}{2m+d}}).
     \end{aligned}
     \end{eqnarray}
     
Now we compare $\eta$ with $1$, and consider two different cases.

Case I. If  $\eta=1$, because $\mathbf E_{\theta}= \int  \bm g_{\theta}\bm g^T_{\theta}d\bm x$.  then we have
      \begin{eqnarray}
     \begin{aligned}
 \frac{\partial}{\partial \bm \theta} \left\{L_{PK}(\bm\theta)+\eta\|\hat\delta_{PK}^{\theta}\|^2_{L_2(\Omega)}\right\}
&=-2\bm a^T_{\theta}+O_p(n^{-\frac{m}{2m+d}}).
     \end{aligned}
     \end{eqnarray}
   Recall that $\bm\theta^s$ is a stationary point of the $L_2$ loss function, which satisfy that,
 \begin{eqnarray}
    \begin{aligned}
   \int\left(\zeta(\bm x)-y^s(\bm x,\bm\theta^s)\right)\frac{\partial y^s(\bm x,\bm\theta^s)}{\partial \bm\theta}d\bm x=\bm a^{T}_{\bm\theta^s}=\bm {0}.
        \end{aligned}
   \end{eqnarray} 
We have $\bm\theta^s$ is a stationary point of the PPK loss function. The Hessian matrix  at $\bm\theta^s$ can easily obtained by evaluating second derivative of the PPK loss on $\bm\theta$,
\begin{eqnarray}
     \begin{aligned}
 \frac{\partial^2}{\partial \bm \theta \partial \bm \theta^{T}} \left\{L_{PK}(\bm\theta)+\eta\|\hat\delta_{PK}^{\theta}\|^2_{L_2(\Omega)}\right\}|_{\bm\theta=\bm\theta^s}
&=-2\left(\frac{\partial \bm a_{\theta}}{ \partial \bm \theta}\right)^T|_{\bm\theta=\bm\theta^s}+O_p(n^{-\frac{m}{2m+d}}).
     \end{aligned}
     \end{eqnarray}
     Moreover, Hessian matrix  at $\bm\theta^s$ of the $L_2$ loss function is 
     \begin{eqnarray}
         \begin{aligned}
 \frac{\partial^2}{\partial \bm \theta \partial \bm \theta^{T}} \left\{\|\zeta(\cdot)-y^s(\cdot,\bm\theta)\|^2_{L_2(\Omega)}\right\}|_{\bm\theta=\bm\theta^s}
&=-2\left(\frac{\partial \bm a_{\theta}}{\partial\bm\theta}\right)^{T}.
     \end{aligned}
     \end{eqnarray}
     
It is easily to be proven that, $\frac{\partial \bm a_{\theta}}{\partial\bm \theta}=\lim _{n\rightarrow \infty}\frac{\partial \bm b_{\theta}}{\partial\bm \theta}$. By the order-preserving {property} of limits of real sequences, we have that if  $ \bm\theta^s$ is a local minimum (maximum) of  the $L_2$ loss function, then  $ \bm\theta^s$ is a local minimum (maximum) of  the PPK loss function.  

Case II. If  $\eta\neq 1$, then 
\begin{eqnarray}
     \begin{aligned}
 &\frac{\partial^2}{\partial \bm \theta\partial \bm \theta^{T}} \left\{L_{PK}(\bm\theta)+\eta\|\hat\delta_{PK}^{\theta}\|^2_{L_2(\Omega)}\right\}|_{\bm\theta=\bm\theta^s}\\
 =&(1-\eta) \left(\frac{\partial \bm a_{\theta}}{\partial \bm\theta} \right)^T\mathbf E^{-1}_{\theta}\frac{\partial {\bm a_{\theta}}}{\partial \bm\theta}|_{\bm\theta=\bm\theta^s}-2\eta\left(\frac{\partial \bm a_{\theta}}{\partial \bm\theta} \right)^T|_{\bm\theta=\bm\theta^s}+O_p(n^{-\frac{m}{2m+d}}),\\
 =&(1-\eta) \left(\frac{\partial \bm a_{\theta}}{\partial \bm\theta} \right)^T|_{\bm\theta=\bm\theta^s}\mathbf E^{-1}_{\theta^s}\left\{\frac{\partial {\bm a_{\theta}}}{\partial \bm\theta}-2\eta/(1-\eta)\mathbf E_{\theta}\right\}|_{\bm\theta=\bm\theta^s}+O_p(n^{-\frac{m}{2m+d}}),\\
     \end{aligned}
     \end{eqnarray}
Now, we want to find the interval of $\eta$ such that if $\frac{\partial {\bm a_{\theta}}}{\partial \bm\theta}|_{\bm\theta=\bm\theta^s}$ is positive (negative) semi-definite, then $\frac{\partial^2}{\partial \bm \theta\partial \bm \theta^{T}} \left\{L_{PPK}(\bm\theta)\right\}|_{\bm\theta=\bm\theta^s}$ is negative (positive) semi-definite. That is, the product of $(1-\eta)$ and $\left\{\frac{\partial {\bm a_{\theta}}}{\partial \bm\theta}-2\eta/(1-\eta)\mathbf E_{\theta}\right\}|_{\bm\theta=\bm\theta^s}$ is negative semi-definite.

Suppose there exist constants $U\geq L$, such that
$$U\mathbf E_{\theta^s}>\frac{\partial {\bm a_{\theta}}}{\partial \bm\theta}|_{\bm\theta=\bm\theta^s}>L\mathbf E_{\theta^s}.$$
Then we have 
\begin{eqnarray}
\label{inequalitye}
     \begin{aligned}
(U-\frac{2\eta}{1-\eta})\mathbf E_{\theta}>\left\{\frac{\partial {\bm b_{\theta}}}{\partial \bm\theta}-\frac{2\eta}{1-\eta}\mathbf E_{\theta}\right\}|_{\bm\theta=\bm\theta^s}>(L-\frac{2\eta}{1-\eta})\mathbf E_{\theta}.
     \end{aligned}
     \end{eqnarray}

To evaluating the production of $1-\eta$ and (\ref{inequalitye}), We consider the sign of $1-\eta$.

\begin{enumerate}[(A)]
 \item If   $(1-\eta)>0$, then $(U-\frac{2\eta}{1-\eta})\leq 0$ is needed to guarantee that  $\frac{\partial^2}{\partial \bm \theta\partial \bm \theta^{T}} \left\{L_{PPK}(\bm\theta)\right\}|_{\bm\theta=\bm\theta^s}$ is negative semi-definite. That is ${U}\leq(U+2)\eta$ and $\eta<1$.
 \item If   $(1-\eta)<0$, then $(L-\frac{2\eta}{1-\eta})\geq 0$ is needed. That is ${L}\leq (L+2)\eta$ and $\eta>1$.
\end{enumerate}

By combining Case I and Case II, we have $\eta$ belongs to the set  $\Gamma_{\eta}=\{\eta=1\} {\cup}\{{U}\leq (U+2)\eta$ \text{and} $\eta<1\} {\cup} \{{L}\leq (L+2)\eta \text{ and }\eta>1\}$. By some easy calculations, $\Gamma_{\eta}$ can be represented as
 \begin{equation}
 \label{gamma}
\Gamma_{\eta}=\left\{
\begin{array}{rcl}
0\leq \eta< \frac{L}{L+2} & & { L< U\leq-2}  \text{ or }  L= U<-2 \\
\max(0,\frac{U}{U+2} )\leq \eta\leq \frac{L}{L+2}  & & {L < -2< U}\\
\eta>\max(0,\frac{U}{U+2} )& & {-2\leq L< U} \text{ or } L= U>-2 \\
\eta\geq 0 & &  { L= U=-2} 
\end{array} \right.
\end{equation}
\end{proof}

\subsection{Proof of Theorem \ref{consistppk}}
\begin{proof}
{We first  prove that $\hat{\bm\theta}_{PPK}$ converges  to $\bm\theta^*$ in probability. The desired results can be proved by showing that
  \begin{eqnarray}
  L_{PPK}(\bm\theta^*)\leq \inf_{\|\bm\theta-\bm\theta^*\|=cn^{-\frac{m}{2m+d}}}L_{PPK}(\bm\theta),
  \label{basic ineq}
   \end{eqnarray}
  for sufficiently large $n$ and some constant $c>0$ to be specified later, where $\|\cdot\|$ denotes the usual Euclidean distance. 
  Then we prove that $\hat{\bm\theta}_{PPK}$ converges in distribution to a normal distribution by following the standard framework for establishing asymptotic theory for M-estimation.
  }
  
  We use the converse method to prove that (\ref{basic ineq}) holds.
  Suppose (\ref{basic ineq}) is false. Then there exists $\tilde{\bm\theta}$ with $\|\bm\theta^*-\tilde{\bm\theta}\|=cn^{-\frac{m}{2m+d}}$ so that
    \begin{eqnarray}
    \label{ineq1}
    \begin{aligned}
 L_{PK}(\bm\theta^*)+\eta\|\hat\delta_{PK}^{\theta^*}\|^2_{L_2(\Omega)}>  L_{PK}(\tilde{\bm\theta})+\eta\|\hat\delta_{PK}^{\tilde{\theta}}\|^2_{L_2(\Omega)}.
 \end{aligned}
  \end{eqnarray}
{Because the sequence $\{\hat{\bm\theta}_{PK}\}$ converges to $\bm\theta^*$ in probability as $n$ goes to infinity, by the proof of Theorem 4.2 in \cite{tuo2019adjustments}, we have that}
   \begin{eqnarray}
   \label{lpkeq1}
  L_{PK}(\bm\theta^*)\leq \inf_{\|\bm\theta-\bm\theta^*\|=c_1n^{-\frac{m}{2m+d}}}L_{PK}(\bm\theta),
   \end{eqnarray}
    for sufficiently large $n$ and some constant $c_1>0$.  Let $c=c_1$,   (\ref{lpkeq1}) implies that
  \begin{eqnarray}
  L_{PK}(\bm\theta^*)\leq L_{PK}(\tilde{\bm\theta}).
  \label{basic ineq pk}
   \end{eqnarray}
    Combining (\ref{ineq1}) and (\ref{basic ineq pk}),  we arrive at
  \begin{eqnarray}
 \|\hat\delta_{PK}^{\theta^*}\|^2_{L_2(\Omega)}>\|\hat\delta_{PK}^{\tilde{\theta}}\|^2_{L_2(\Omega)}.
   \end{eqnarray}
  
   On the other hand, by the definition of $\bm\theta^*$, we have that
   \begin{eqnarray}
 \|\delta^{\theta^*}\|^2_{L_2(\Omega)}\leq \|\delta^{\tilde{\theta}}\|^2_{L_2(\Omega)}.
   \end{eqnarray}
By the uniform convergence $\hat\delta_{PK}^{\theta}$ in Proposition \ref{condelta}, we have 
 \begin{eqnarray}
\lim_{n\rightarrow \infty}  \|\hat\delta_{PK}^{\theta^*}\|^2_{L_2(\Omega)}=\|\delta^{\theta^*}\|^2_{L_2(\Omega)}.
   \end{eqnarray}
By the order-preserving properties of limits of real sequences, there exist $N\in \bm{R}$ such that, for any $n>N$,
 \begin{eqnarray}
 \|\hat\delta_{PK}^{\theta^*}\|^2_{L_2(\Omega)}\leq\|\hat\delta_{PK}^{\tilde{\theta}}\|^2_{L_2(\Omega)}.
   \end{eqnarray}
   This leads to a contradiction.
   
{Next we prove that $\hat{\bm\theta}_{PPK}$ converges in distribution to a normal distribution. Because $\hat\delta^{\hat{\bm\theta}_{PPK}}_{PK}=\mathcal{P}^{\perp}_{\mathcal{G}_{\hat\theta_{PPK}}}{\hat\delta^{\hat{\theta}_{PPK}}}$, by the definition of $\hat{\bm\theta}_{PPK}$, we have 
 \begin{eqnarray}
 \label{asynorm1}
 \frac{\partial \tilde l(\bm\theta)}{\partial \bm\theta}|_{\bm\theta=\hat{\bm\theta}_{PPK}}=0,
    \end{eqnarray}
where
$$\tilde l(\bm\theta)=\frac{1}{n}\sum_{i=1}^n\left(y_i-y^s(\bm x_i,\bm\theta)-\mathcal{P}^{\perp}_{\mathcal{G}_{\theta}}\hat\delta^{\hat{\theta}_{PPK}}(\boldsymbol x_i)\right)^2+\lambda\|\mathcal{P}^{\perp}_{\mathcal{G}_{\theta}}\hat\delta^{\hat{\theta}_{PPK}}\|^2_{\mathcal{N}_{K_{\theta}}(\Omega)}+\eta\|\mathcal{P}^{\perp}_{\mathcal{G}_{\theta}}\hat\delta^{\hat{\theta}_{PPK}}\|^2_{L_2(\Omega)}.$$

Involved with $\mathcal {P}_{\mathcal{G}_{\theta}}\hat\delta^{\hat{\theta}_{PPK}}=\bm {\tilde b}^{T}_{\theta}\mathbf E^{-1}_{\theta}\bm g_{\theta}$, where $\bm {\tilde b}_{\theta}=<\hat\delta^{\hat{\theta}_{PPK}}, \bm g_{\theta}>_{L_2(\Omega)}$,  we have that
$$\frac{\partial \mathcal {P}^{\perp}_{\mathcal{G}_{\theta}}\hat\delta^{\hat{\theta}_{PPK}}}{\partial \bm\theta}=-\frac{\partial \bm {\tilde b}^{T}_{\theta}}{\partial\bm\theta}\mathbf E^{-1}_{\theta}\bm g_{\theta}-\bm {\tilde b}^{T}_{\theta}\frac{\partial \mathbf E^{-1}_{\theta}\bm g_{\theta}}{\partial\bm\theta}.$$
Because $<\mathcal {P}^{\perp}_{\mathcal{G}_{\theta}}\hat\delta^{\hat{\theta}_{PPK}},\bm g_{\theta}>|_{\bm\theta=\hat{\bm\theta}_{PPK}}=\bm 0$, we can easily verify that
$$\frac{\partial \mathcal {P}^{\perp}_{\mathcal{G}_{\theta}}\hat\delta^{\hat{\theta}_{PPK}}}{\partial \bm\theta}|_{\bm\theta=\hat{\bm\theta}_{PPK}}=-\frac{\partial \bm {\tilde b}^{T}_{\theta}}{\partial\bm\theta}\mathbf E^{-1}_{\theta}\bm g_{\theta}|_{\bm\theta=\hat{\bm\theta}_{PPK}},$$
and
$$\frac{\partial \|\mathcal{P}^{\perp}_{\mathcal{G}_{\theta}}\hat\delta^{\hat{\theta}_{PPK}}\|^2_{L_2(\Omega)}}{\partial \bm\theta}|_{\bm\theta=\hat{\bm\theta}_{PPK}}=\bm 0.$$
Moreover,  Eq. (69) in \cite{tuo2019adjustments} shows that
$$\lambda \frac{\partial \|\mathcal{P}^{\perp}_{\mathcal{G}_{\theta}}\hat\delta^{\hat{\theta}_{PPK}}\|^2_{\mathcal{N}_{K_{\theta}}(\Omega)}}{\partial \bm\theta}|_{\bm\theta=\hat{\bm\theta}_{PPK}}=o_p(n^{-1/2}).$$
 By some elementary calculations, (\ref{asynorm1}) becomes 
 \begin{eqnarray}
 \label{asynorm2}
  \frac{1}{n}\sum_{i=1}^n\left[(\zeta(\bm x_i)-y^s(\bm x_i,\hat{\bm\theta}_{PPK}))\bm g_{\hat\theta_{PPK}}\right]+ \frac{1}{n} \sum_{i=1}^n\epsilon_i \bm g_{\hat\theta_{PPK}}=o_p(n^{-1/2}).
    \end{eqnarray}
By applying Taylor's theorem, the first part of  (\ref{asynorm2}) can be represented by
 \begin{eqnarray}
 \label{asynorm2-1}
 \begin{aligned}
 \frac{1}{n}\sum_{i=1}^n\left[(\zeta(\bm x_i)-y^s(\bm x_i,{\bm\theta^*}))\bm g_{\theta^*}\right] -\frac{1}{2}\mathbf{V}(\hat {\bm\theta}_{PPK}-\bm\theta^*)+o_p(n^{-1/2}).
    \end{aligned}
     \end{eqnarray}
Because $\delta^*= {\delta}^{\theta^*}\in{\mathcal{G}^{\perp}_{\theta^*}}$, together with the asymptotic equivalence relation between the $L_2$ and the empirical norm (\ref{semi2}),  there is $\frac{1}{n}\sum_{i=1}^n\left[(\zeta(\bm x_i)-y^s(\bm x_i,{\bm\theta^*}))\bm g_{\theta^*}\right]= o_p(n^{-1/2})$. The desired result can be then obtained by combining  (\ref{asynorm2}) and ( \ref{asynorm2-1}).
}
 \end{proof}  
 \subsection{Proof of Theorem \ref{predpower}}
 \begin{proof}
  Let $\hat\zeta_{n}(\cdot)=\hat\delta^{\hat{\bm\theta}_{PPK}}_{PK}(\cdot)+y^s(\cdot,\hat{\bm\theta}_{PPK})$, which is an estimator of $\zeta(\cdot)$. The triangle inequality implies that
{  \begin{eqnarray} 
\begin{aligned}
\|\hat\zeta_n-\zeta\|_{L_2(\Omega)}
\leq& \|\hat\delta^{{\bm\theta^*}}_{PK}-\delta^*\|_{L_2(\Omega)}+\\
&\|\hat\delta^{\hat{\bm\theta}_{PPK}}_{PK}-\hat\delta^{{\bm\theta^*}}_{PK}\|_{L_2(\Omega)}+\|y^s(\cdot,\hat{\bm\theta}_{PPK})-y^s(\cdot,{\bm\theta^*})\|_{L_2(\Omega)},\\
=&\rm{I+II+III}.
\end{aligned}
\end{eqnarray} }
Next we bound (I),(II) and (III) respectively. 

{For (I), because $\delta^*= {\delta}^{\theta^*}\in{\mathcal{G}^{\perp}_{\theta^*}}$, it follows from Corollary \ref{col3.6} that 
$${\rm{I}}\leq \sup_{\boldsymbol \theta\in\Theta}\left\|\mathcal{P}^{\perp}_{\mathcal{G}_{\theta}} {\delta}^{\theta}-\hat{\delta}^{\theta}_{PK}\right\|_{L_2(\Omega)} =O_p(n^{-\frac{m}{2m+d}}).$$ 

For  (II), we can apply Taylor's theorem to conclude that
  \begin{eqnarray} 
\begin{aligned}
 \|\hat\delta^{\hat{\bm\theta}_{PPK}}_{PK}-\hat\delta^{{\bm\theta^*}}_{PK}\|_{L_2(\Omega)}\leq &\left\|\frac{\partial\hat\delta^{\theta^*}_{PK}}{\partial\bm\theta}\right\|_{L_2(\Omega)}\|\hat{\bm\theta}_{PPK}-\bm\theta^*\|+O_p(\|\hat{\bm\theta}_{PPK}-\bm\theta^*\|^2).
  \end{aligned}
\end{eqnarray}
Here, $\|\cdot\|$ is the euclidean distance. 
The first derivative of $\hat\delta_{PK}^{\theta}$ on $\theta_j, j=1,\ldots,q$ (\ref{ddeltahat}) suggests that 
 \begin{eqnarray} 
  \begin{aligned}
  \frac{\partial \hat \delta_{PK}^{\theta}}{\partial \theta_j}|_{\bm\theta^*}
 &= -\frac{\partial y^s(\bm x,\bm\theta^*)}{\partial \theta_j}-\frac{\partial {\bm b^T_{\theta^*}}}{\partial \theta_j}\mathbf E^{-1}_{\theta^*}\bm g_{\theta^*}(\bm x)-\\
 &-\bm b^T_{\theta^*}\frac{\partial {\mathbf E^{-1}_{\theta^*}}}{\partial \theta_j} \bm g_{\theta^*}(\bm x)-\bm b^T_{\theta^*}\mathbf E^{-1}_{\theta^*}\frac{\partial \bm g_{\theta^*}(\bm x)}{\partial \theta_j},\\
 &= -<\delta^*,\frac{\partial^2 y^s(\bm x,\bm\theta^*)}{\partial \theta_{i}\partial \theta_j}>_{L_2(\Omega)}\mathbf E^{-1}_{\theta^*}\bm g_{\theta^*}(\bm x)+O_p(n^{-\frac{m}{2m+d}}).
\end{aligned}
\label{ddeltahat2}
\end{eqnarray}

The last equality is following from Proposition  \ref{condelta}. By condition A3 and condition B2,  together with the triangle inequality, we have that $\|\frac{\partial \hat \delta_{PK}^{\theta^*}}{\partial \theta_j}\|_{L_2(\Omega)}$ can be bounded by a finite constant. Combining with the asymptotic normal of $\hat\theta_{PPK}$, we have that $ {\rm II}=O_p(n^{-1/2})$. }

Following a similar argument, we have  ${\rm{III}}=O_p(n^{-1/2})$. This leads to the desired result.

\end{proof}

\bibliographystyle{siamplain}
\bibliography{allrefs}

\end{document}